\newtheorem{theorem}{Theorem}
\newtheorem{lemma}[theorem]{Lemma}
\newtheorem{remark}{Remark}
\begin{document}

\title{Estimation of Scalar Field Distribution in the Fourier Domain}
\author{Alex S.\ Leong and Alexei T. Skvortsov \\  
 \thanks{The authors are with Defence Science and Technology Group, Melbourne, Australia. E-mail: {\tt alex.leong@defence.gov.au, alexei.skvortsov@defence.gov.au} }  
  }

\maketitle

\begin{abstract}

In this paper we consider the problem of estimation of scalar field distribution (e.g. pollutant, moisture, temperature) from noisy measurements collected by unmanned autonomous vehicles such as UAVs. The field is modelled as a sum of Fourier components/modes, where the number of modes retained and estimated determines in a natural way the approximation  quality. An algorithm for estimating the modes using an online optimization approach is presented, under the assumption that the noisy measurements are quantized. The algorithm can also estimate time-varying fields through the introduction of a forgetting factor. Simulation studies
demonstrate the effectiveness of the proposed approach.
\end{abstract}

\section{Introduction}

Estimation of scalar field distribution from a set of point measurements  is an important problem often emerging in domains such as atmospheric monitoring, risk assessment, and  hazard mitigation.
Examples include concentration of pollutant, carbon dioxide emission, methane sources, radiation, temperature in urban areas, and many others, see 
\cite{HutchinsonOh, RisticMorelandeGunatilaka, Yardibi, AnnunzioYoungHaupt, NeumannBennetts_advanced_robotics, WadeSenocak, NewazJeong, RisticGunatilakaGailis, Selvaratnam_CDC, HutchinsonLiu, EslingerMendez, LiChen, ParkAnSeoOh, WeidmannHirst,  MartinPayton, WangYangWu, MorelandeSkvortsov,  LaSheng, LaShengChen, RazakSukumarChung_journal, LeongZamani_SP, LeongZamaniShames, TranGarratt} and the references therein. This approach is often used for indirect inference of scalar fields (pollutant concentration, pressure, temperature, radiation)  in inaccessible locations where the direct measurements are prohibited due to some geometrical or physical constraints (blocking obstacles, high temperature, or exposure to hazards). The methods of source localisation \cite{HutchinsonOh, RisticMorelandeGunatilaka, Yardibi, AnnunzioYoungHaupt, NeumannBennetts_advanced_robotics, WadeSenocak, NewazJeong, RisticGunatilakaGailis, Selvaratnam_CDC, HutchinsonLiu, EslingerMendez, LiChen, ParkAnSeoOh, WeidmannHirst} and mapping \cite{MartinPayton, WangYangWu, MorelandeSkvortsov,  LaSheng, LaShengChen, RazakSukumarChung_journal, LeongZamani_SP, LeongZamaniShames, TranGarratt} employing remote (and noisy) measurements have attracted increasing attention in recent years due to tremendous progress in instrumentation for aerial and remote sensing using unmanned autonomous vehicles such as UAVs. This technological advancement necessitates the development and evaluation of some statistical methods and algorithms that can be applied for the timely estimation of the structure (map) of the scalar field in the environment from an ever-increasing set of noisy measurements acquired in a  sequential or concurrent manner (e.g.,  sensing signals from unmanned vehicles  operating over the hazardous area, trigger signals from meteorological stations, the intermittent concentration of methane emissions in the atmosphere or ocean floor, oil surface concentration due to androgenic spill,  etc). These algorithms may become critical for backtracking and characterization of the main sources of the scalar field in the environment which is important for the  remediation effectiveness and retrospective forensic analysis.   This was the main motivation for the present study.

In work on estimation of scalar fields, the field is often modelled as a sum of radial basis functions (RBFs) or Gaussian mixture models, see, e.g., \cite{MorelandeSkvortsov,  LaSheng, LaShengChen, RazakSukumarChung_journal, LeongZamani_SP, LeongZamaniShames, TranGarratt}. Field estimation then reduces to a problem of estimating the parameters of these models. By contrast, for the current work, we assume the field to be an arbitrary 2D function which can be viewed in the Fourier domain using, e.g., the discrete Fourier transform (DFT) or the discrete cosine transform (DCT) \cite{BritanikYipRao}. For some intuition behind this approach, suppose we  regard the plot of the field as an image. From image processing, it is well-known that the most important parts of an image are concentrated in the lowest (spatial) frequency components/modes. Our approach to field estimation is then to estimate the low frequency Fourier components.\footnote{We will use the terms Fourier component and DCT component interchangeably in this paper.} One of the advantages for using this Fourier component approach compared to the RBF approach is that it offers a perhaps more natural way to control the accuracy of the approximation, e.g., by controlling the number of Fourier modes used/retained. Furthermore, if one wants to refine the field estimate by estimating more modes, existing estimates of the lower order modes can be reused. 

The main contributions of this paper are:
\begin{itemize}
    \item Rather than the use of radial basis function field models, we model the 2D scalar field in the Fourier domain as a sum of Fourier components. 
    \item A numerical comparison of the approximation capabilities of the Fourier components and RBF field models is carried out.    
    \item For the quantized measurements model, we present in detail how Fourier component estimation can be carried out using an online optimization approach similar to \cite{LeongZamaniShames}. We further extend the approach of \cite{LeongZamaniShames} from binary measurements to multi-level quantized measurements, and from static to time-varying fields. 
\end{itemize}

The organization of the paper is as follows: Section \ref{sec:preliminaries} gives preliminaries on the DCT and motivation for its use in field modelling. Section \ref{sec:system_model} presents the system model. Section~\ref{sec:DCT_RBF_comparison} compares our Fourier component field model with the RBF field model in terms of approximation performance. Section \ref{sec:DCT_estimation} 
considers in detail the estimation of Fourier components using  quantized measurements. Numerical studies  are presented in Section \ref{sec:numerical}. 

\section{Preliminaries}
\label{sec:preliminaries}

Consider a region of interest $\mathcal{S} = [X_{\textnormal{min}}, X_{\textnormal{max}}] \times [Y_{\textnormal{min}}, Y_{\textnormal{max}}]$. Discretize $[X_{\textnormal{min}}, X_{\textnormal{max}}]$ into $N_x$ points and $ [Y_{\textnormal{min}}, Y_{\textnormal{max}}]$ into $N_y$ points as
\begin{align*}
\mathcal{X}_d \triangleq \left\{X_{\textnormal{min}} + \Big(\frac{1}{2} + I_x \Big) \Delta_x:   I_x \in \{0, \dots, N_x - 1\} \right\} \\
\mathcal{Y}_d \triangleq \left\{Y_{\textnormal{min}} + \Big(\frac{1}{2} + I_y \Big) \Delta_y:  I_y \in \{0, \dots, N_y - 1\} \right\},
\end{align*}
where 
$$\Delta_x \triangleq \frac{X_{\textnormal{max}} - X_{\textnormal{min}}}{N_x}, \quad \Delta_y \triangleq \frac{Y_{\textnormal{max}} - Y_{\textnormal{min}}}{N_y}. $$

Our aim is the estimation of 2D distribution of a scalar field $\phi(x,y)$, $(x,y) \in \mathcal{S}$,  which is assumed either static or slowly varying. We define
$$\phi_d(I_x, I_y) \triangleq \phi\Big(X_{\textnormal{min}} + \left(1/2 + I_x \right) \Delta_x, Y_{\textnormal{min}} + \left(1/2 + I_y \right) \Delta_y \Big)$$
as the field value at the discretized position $\big(X_{\textnormal{min}} + \left(1/2 + I_x \right) \Delta_x, Y_{\textnormal{min}} + \left(1/2 + I_y \right) \Delta_y \big) \in \mathcal{X}_d \times \mathcal{Y}_d$, where $(I_x, I_y)$ can be regarded as a position index. 
Recall the (Type-II) discrete cosine transform (DCT), see, e.g., \cite{BritanikYipRao,Strang_DCT}: 
\begin{align*}
C(u,v) &= \sum_{I_x=0}^{N_x-1} \sum_{I_y=0}^{N_y-1} \alpha_x(u) \alpha_y(v) \phi_d(I_x,I_y) \cos \left(\frac{(2I_x+1)\pi u}{2 N_x} \right) \cos \left(\frac{(2I_y+1)\pi v}{2 N_y} \right), \\ &\quad\quad u=0,\dots,N_x-1, \quad v=0,\dots,N_y-1,
\end{align*}
where 
$$ \alpha_x(u) \triangleq \left\{\begin{array}{ll} \sqrt{\frac{1}{N_x}}, & u = 0 \\ 
\sqrt{\frac{2}{N_x}}, & u \neq 0
\end{array} \right., \quad 
\alpha_y(v) \triangleq \left\{\begin{array}{ll} \sqrt{\frac{1}{N_y}}, & v = 0 \\
\sqrt{\frac{2}{N_y}}, & v \neq 0.
\end{array} \right. $$

The inverse DCT is given by:
\begin{align}
\phi_d(I_x,I_y) &= \sum_{u=0}^{N_x-1} \sum_{v=0}^{N_y-1} \alpha_x(u) \alpha_y(v) C(u,v) \cos \left(\frac{(2I_x+1)\pi u}{2 N_x} \right) \cos \left(\frac{(2I_y+1)\pi v}{2 N_y} \right), \label{eqn:inverse_DCT} \\ &\quad\quad I_x=0,\dots,N_x-1, \quad I_y=0,\dots,N_y-1. \nonumber
\end{align}
It will be convenient for our purposes to rewrite \eqref{eqn:inverse_DCT} as
\begin{align}
\phi_d(I_x,I_y) &= \sum_{(u,v) \in \mathcal{U}}  \alpha_x(u) \alpha_y(v) C(u,v) \cos \left(\frac{(2I_x+1)\pi u}{2 N_x} \right) \cos \left(\frac{(2I_y+1)\pi v}{2 N_y} \right), \label{eqn:inverse_DCT_single_summation} \\ &\quad\quad I_x=0,\dots,N_x-1, \quad I_y=0,\dots,N_y-1, \nonumber
\end{align}
where 
$$\mathcal{U} \triangleq \{(u,v): u \in \{0, \dots, N_x-1\}, v \in \{0, \dots, N_y-1\} \}.$$

The most important information about the field distribution is concentrated in the low order modes, i.e. the components corresponding to $\cos \left(\frac{(2I_x+1)\pi u}{2 N_x} \right) \cos \left(\frac{(2I_y+1)\pi v}{2 N_y} \right)$ with $u$ and $v$ small, while higher order modes define the fine structure of the field distribution. See Figs. \ref{fig:field_seed341_DCT} and \ref{fig:field_seed343_DCT} for examples of how retaining different numbers of modes affects the quality of the approximation to the true field.

\section{System Model}
\label{sec:system_model}

\subsection{Field Model}
Motivated by the above discussion, we propose to approximate \eqref{eqn:inverse_DCT_single_summation} by
\begin{equation}
\label{field_model} 
\begin{split}
\phi_d(I_x,I_y) & \approx \sum_{(u,v)\in \tilde{\mathcal{U}}} \alpha_x(u) \alpha_y(v) C(u,v) \cos \left(\frac{(2I_x+1)\pi u}{2 N_x} \right) \cos \left(\frac{(2I_y+1)\pi v}{2 N_y} \right), \\ & \quad\quad I_x=0,\dots,N_x-1, \quad I_y=0,\dots,N_y-1 \\ & \triangleq \tilde{\phi}_d(I_x,I_y),
\end{split}
\end{equation}
where $\tilde{\mathcal{U}} \subseteq \mathcal{U}$ is the subset of low order modes that we wish to retain.\footnote{In general, we could in \eqref{field_model} use  coefficients $\tilde{C}(u,v)$ which are not necessarily equal to $C(u,v)$. One reason for taking the coefficients to be equal to $C(u,v)$ is given in Lemma \ref{lemma:optimal_C_DCT}.}

For example, we could  retain the first $\tilde{N}_x \times \tilde{N}_y$ modes, with $\tilde{N}_x \leq N_x, \tilde{N}_y \leq N_y$, so that 
\begin{equation}
\label{eqn:U_tilde_rect}
 \tilde{\mathcal{U}} = \{(u,v): u \in \{0, \dots, \tilde{N}_x-1\}, v \in \{0, \dots, \tilde{N}_y-1\} \}.
 \end{equation}
The total number  of modes retained $\tilde{N} $ is thus equal to $\tilde{N} = \tilde{N}_x  \tilde{N}_y$.

Another possibility is the following:
\begin{equation}
\label{eqn:U_tilde_largest}
\tilde{\mathcal{U}} = \{ \tilde{N} \textnormal{ pairs } (u,v) \textnormal{ with smallest values of } (u+1)^2 + (v+1)^2 \}
\end{equation}
which tries to retain the $\tilde{N}$ ``largest'' (in magnitude) modes.\footnote{This is of course an approximation, as exactly determining the $\tilde{N}$ largest modes depends on and requires knowledge of the very field that we are trying to estimate.} The motivation for \eqref{eqn:U_tilde_largest} comes from a result that the DCT coefficients $C(u,v)$ decay as $O \big(\frac{1}{(u+1)^2 + (v+1)^2} \big)$ for $u,v \rightarrow \infty$ \cite{YamataniSaito}. Thus the larger components will usually have smaller values of  $(u+1)^2 + (v+1)^2$, leading to the choice \eqref{eqn:U_tilde_largest}. In numerical simulations, we have found \eqref{eqn:U_tilde_largest} to give better approximations than \eqref{eqn:U_tilde_rect} (for the same number of retained modes $\tilde{N}$) in many, though not all, cases.

\subsection{Measurement Model}
In this paper we consider the following noisy quantized measurement model for a vehicle at position index $(I_x,I_y)$:
\begin{equation}
\label{quantized_measurement_model}
z(I_x,I_y) = q(\phi_d(I_x,I_y) + n(I_x, I_y))
\end{equation}
where $n(\bm{\cdot},\bm{\cdot})$ is random noise and $q(\bm{\cdot})$ is a quantizer of $L$ levels, say $\{0, 1, \dots, L-1\}$. The quantizer can be expressed in the form 
\begin{equation}
\label{eqn:quantizer}
q(x) = \left\{\begin{array}{cc} 0, & x < \tau_0 \\ 1, & \tau_0 \leq x < \tau_1 \\ \vdots & \vdots \\ L-2, & \tau_{L-3} \leq x < \tau_{L-2} \\ L-1, & x \geq \tau_{L-2}    \end{array} \right. 
\end{equation}
where the quantizer thresholds $\{\tau_0,\dots,\tau_{L-2}\}$ satisfy $\tau_0 \leq \tau_1 \leq \dots \leq \tau_{L-2}$. The use of a quantized measurement model is motivated by the fact that many chemical sensors can only provide output from a small number of discrete bars \cite{RobinsRapleyThomas,ChengKondaSinghScott}.



\begin{remark}
The special case of \eqref{quantized_measurement_model} corresponding to a 2-level quantizer, or binary measurements, is considered in \cite{LeongZamani_SP,LeongZamaniShames,TranGarratt}. 
It can be expressed as
\begin{equation}
\label{binary_measurement_model}
z(I_x,I_y) = \mathds{1env} \big(\phi_d(I_x,I_y) + n(I_x, I_y) > \tau \big),
\end{equation}
where $\tau$ is the quantizer threshold, and $\mathds{1}(\bm{\cdot})$ is the indicator function that returns 1 if its argument is true and 0 otherwise. 
Other measurement models which have been considered in the literature include additive noise models \cite{LaSheng,LaShengChen} and Poisson measurement models \cite{MorelandeSkvortsov}.    
\end{remark}


\subsection{Problem Statement}
\label{sec:problem_statement}
The problem we wish to consider in this paper is to estimate the coefficients\footnote{When we refer to \emph{estimation of components/modes} in this paper, we specifically mean estimation of the coefficients $C(u,v)$.}
$$C(u,v), \,\,(u,v) \in \tilde{\mathcal{U}}$$
of the field $\phi_d(I_x,I_y)$, from quantized measurements $\{z(I_x,I_y)\}$ collected by an unmanned autonomous vehicle under the measurement model \eqref{quantized_measurement_model}. The estimation should be done in an online manner such that the estimates are continually updated as new measurements are collected.

\section{Comparison with RBF Field Model}
\label{sec:DCT_RBF_comparison}
Before we consider the problem of estimating the coefficients $C(u,v)$ (which will be studied in Section \ref{sec:DCT_estimation}), we will in this section compare the use of our Fourier component model
\eqref{field_model}
with the radial basis function model considered in \cite{RazakSukumarChung_journal, LeongZamani_SP,LeongZamaniShames,TranGarratt} (see also \cite{LaSheng,LaShengChen,MorelandeSkvortsov} for similar models), in terms of how well they can approximate a field for a given number of modes (for the Fourier component model) or basis functions (for the RBF model). 

\subsection{Fourier Component Field Model}
Define the mean squared error (MSE):
\begin{equation}
\label{eqn:MSE_Fourier}
 \textnormal{MSE} \triangleq \frac{1}{N_x N_y} \sum_{I_x=0}^{N_x-1} \sum_{I_y=0}^{N_y-1} \Big( \phi_d (I_x, I_y) - \tilde{\phi}_d (I_x, I_y)  \Big)^2,
 \end{equation}
where 
 $\phi_d(I_x,I_y)$ is the (discretized) true field given by \eqref{eqn:inverse_DCT_single_summation} and
$$\tilde{\phi}_d (I_x, I_y)   \triangleq \sum_{(u,v) \in \tilde{\mathcal{U}} } \alpha_x(u) \alpha_y(v) \tilde{C}(u,v) \cos \left(\frac{(2I_x+1)\pi u}{2 N_x} \right) \cos \left(\frac{(2I_y+1)\pi v}{2 N_y} \right)$$
is the approximation of the true field using a subset of modes $\tilde{U}$ and coefficients $\tilde{C}(u,v)$. The expression for $\tilde{\phi}_d (I_x, I_y) $ is the same as \eqref{field_model} except that the coefficients $\tilde{C}(u,v)$ may be different from $C(u,v)$. However, it turns out that setting $\tilde{C}(u,v)$ to be equal to $C(u,v)$ will minimize the MSE. 

\begin{lemma}
\label{lemma:optimal_C_DCT}
Given a subset of modes $\tilde{U}$, the optimal values of $\tilde{C}(u,v)$ that minimize \eqref{eqn:MSE_Fourier} satisfy
\begin{equation}
\label{eqn:optimal_C}
 \tilde{C}^*(u,v) = C(u,v), \, \forall (u,v) \in \tilde{U}.
 \end{equation}
\end{lemma}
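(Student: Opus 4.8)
The plan is to exploit the orthonormality of the (Type-II) DCT basis functions, which reduces the problem to a standard least-squares projection. First I would introduce the shorthand
$$\psi_{u,v}(I_x,I_y) \triangleq \alpha_x(u)\alpha_y(v)\cos\left(\frac{(2I_x+1)\pi u}{2N_x}\right)\cos\left(\frac{(2I_y+1)\pi v}{2N_y}\right),$$
so that $\phi_d = \sum_{(u,v)\in\mathcal{U}} C(u,v)\psi_{u,v}$ by \eqref{eqn:inverse_DCT_single_summation}, while $\tilde\phi_d = \sum_{(u,v)\in\tilde{\mathcal{U}}} \tilde{C}(u,v)\psi_{u,v}$. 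The key property I would invoke (and, if a self-contained argument is wanted, verify from the standard orthogonality of cosines over the half-integer nodes) is
$$\sum_{I_x=0}^{N_x-1}\sum_{I_y=0}^{N_y-1}\psi_{u,v}(I_x,I_y)\,\psi_{u',v'}(I_x,I_y) = \delta_{u u'}\delta_{v v'},$$
where the normalizing factors $\alpha_x,\alpha_y$ are precisely what make this basis orthonormal rather than merely orthogonal.

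Next I would write the error as a single sum over $\mathcal{U}$,
$$\phi_d - \tilde\phi_d = \sum_{(u,v)\in\tilde{\mathcal{U}}}\big(C(u,v)-\tilde{C}(u,v)\big)\psi_{u,v} + \sum_{(u,v)\in\mathcal{U}\setminus\tilde{\mathcal{U}}} C(u,v)\,\psi_{u,v},$$
substitute it into \eqref{eqn:MSE_Fourier}, and apply the orthonormality relation (Parseval's identity). All cross terms vanish, leaving
$$\textnormal{MSE} = \frac{1}{N_xN_y}\left[\sum_{(u,v)\in\tilde{\mathcal{U}}}\big(C(u,v)-\tilde{C}(u,v)\big)^2 + \sum_{(u,v)\in\mathcal{U}\setminus\tilde{\mathcal{U}}} C(u,v)^2\right].$$
The second sum does not depend on the free parameters $\tilde{C}$, while the first is a sum of independent nonnegative squares. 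It is therefore minimized, and in fact driven to its contribution of zero, exactly when $\tilde{C}(u,v)=C(u,v)$ for every $(u,v)\in\tilde{\mathcal{U}}$, which is \eqref{eqn:optimal_C}.

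An equivalent route is the orthogonality principle: setting $\partial\,\textnormal{MSE}/\partial\tilde{C}(u',v')=0$ yields the normal equations $\langle\phi_d-\tilde\phi_d,\psi_{u',v'}\rangle=0$ for each $(u',v')\in\tilde{\mathcal{U}}$, and evaluating the two inner products by orthonormality again gives $\tilde{C}^*(u',v')=C(u',v')$. The only genuine technical content in either approach is the orthonormality sum above; once it is in hand the conclusion is immediate. I therefore expect the main (and essentially only) obstacle to be establishing that identity, which reduces to checking the one-dimensional relation $\sum_{I=0}^{N-1}\alpha(u)\alpha(u')\cos\frac{(2I+1)\pi u}{2N}\cos\frac{(2I+1)\pi u'}{2N}=\delta_{uu'}$ and taking the product over the two coordinates.
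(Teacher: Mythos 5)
Your proof is correct and follows essentially the same route as the paper's: both split the error into the retained-mode part $\sum_{(u,v)\in\tilde{\mathcal{U}}}(C-\tilde{C})\psi_{u,v}$ plus the discarded-mode part, and both use orthogonality of the DCT basis to eliminate the cross term, after which the minimizer $\tilde{C}=C$ is immediate. The only cosmetic difference is that you push the orthonormality one step further (a Parseval identity) to fully diagonalize the two remaining squared sums, whereas the paper leaves them as sums over $(I_x,I_y)$ and observes directly that the $\tilde{C}$-dependent term is nonnegative and vanishes at $\tilde{C}=C$.
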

\begin{proof}
See the Appendix.
\end{proof}

\begin{remark}
One can also minimize \eqref{eqn:MSE_Fourier} by treating it as a linear least squares problem \cite{CalafioreElGhaoui,Murphy_book1}, which will numerically give the same solution, however the analytical expression \eqref{eqn:optimal_C} provided by Lemma \ref{lemma:optimal_C_DCT} is much more explicit. 
\end{remark}

\subsection{RBF Field Model}
The following RBF field model is used in \cite{RazakSukumarChung_journal, LeongZamani_SP,LeongZamaniShames,TranGarratt}:
\begin{equation}
\label{field_model_RBF}
\phi(\mathbf{x}) \approx \sum_{j=1}^J \beta_j K_j(\mathbf{x}),
\end{equation}
where $\mathbf{x} \triangleq (x,y)$ and $K_j(\mathbf{x}), j=1,\dots,J$ are radial basis functions. In particular, we consider the choice
 \begin{equation}
 \label{eqn:Gaussian_RBF}
 K_j(\mathbf{x}) = \exp \left(- \frac{\|\mathbf{c}_j-\mathbf{x}\|^2}{\sigma_j^2}\right), \quad j=1,\dots,J,
 \end{equation}
which results in a Gaussian mixture model \cite{MorelandeSkvortsov}.
For a given number of basis functions $J$, we assume that the $\mathbf{c}_j$'s and $\sigma_j$'s are chosen,\footnote{The case where the $\mathbf{c}_j$'s and $\sigma_j$'s are also estimated has been considered, but was found to suffer from identifiability issues and sometimes give very unreliable results \cite{LeongZamani_SP}.} while the $\beta_j$'s are free parameters. Algorithms for estimating the $\beta_j$'s are studied in, e.g., \cite{RazakSukumarChung_journal, LeongZamani_SP,LeongZamaniShames,TranGarratt}. Here we consider instead the problem of finding the optimal  $\beta_j$'s in order to minimize the mean squared error, to see how good the RBF model can be when approximating a field for a given set of basis functions. Define
\begin{equation}
\label{eqn:MSE_RBF}
\textnormal{MSE}_{RBF} \triangleq \frac{1}{|\mathcal{S}_d|} \sum_{\mathbf{x} \in \mathcal{S}_d} \Big( \phi (\mathbf{x}) - \sum_{j=1}^J \beta_j K_j(\mathbf{x}) \Big)^2,    
\end{equation}
 where $\phi (\mathbf{x}) $ is the true field value at position~$\mathbf{x}$, $\mathcal{S}_d$ is a discretized set of points in the search region $\mathcal{S}$, and $|\mathcal{S}_d|$ is the cardinality of $\mathcal{S}_d$. 

\begin{lemma}
\label{lemma:optimal_beta_RBF}
Given a set of radial basis functions $\{K_1(.), \dots, K_j(.)\}$ and an ordering $\{\mathbf{x}_1, \dots, \mathbf{x}_{|\mathcal{S}_d|} \}$ of the elements in $\mathcal{S}_d$, the optimal values of $(\beta_1, \dots, \beta_J)$ that minimize \eqref{eqn:MSE_RBF} satisfy
$$ \bm{\beta}^* = \left( \mathcal{K}^T \mathcal{K} \right)^{-1} \mathcal{K}^T \bm{\phi},$$
where $\bm{\beta} = \begin{bmatrix} \beta_1 & \dots & \beta_J \end{bmatrix}^T$, $\bm{\phi} = \begin{bmatrix} \phi(\mathbf{x}_1), \dots, \phi(\mathbf{x}_{|\mathcal{S}_d|}) \end{bmatrix}^T$, and 
$$\mathcal{K} = \begin{bmatrix}
K_1(\mathbf{x}_1) & \dots & K_J(\mathbf{x}_1) \\
\vdots & \ddots & \vdots \\
K_1(\mathbf{x}_{|\mathcal{S}_d|}) & \dots & K_J(\mathbf{x}_{|\mathcal{S}_d|})
\end{bmatrix}.$$
\end{lemma}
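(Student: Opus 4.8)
The plan is to recognize \eqref{eqn:MSE_RBF} as a standard linear least squares problem and solve it via the normal equations. Since the true-field values and the basis evaluations enter \eqref{eqn:MSE_RBF} only through the fixed ordering $\{\mathbf{x}_1,\dots,\mathbf{x}_{|\mathcal{S}_d|}\}$, the first step is to rewrite the objective in matrix-vector form. The inner sum $\sum_{j=1}^J \beta_j K_j(\mathbf{x}_i)$ is precisely the $i$-th entry of the product $\mathcal{K}\bm{\beta}$, so
\begin{equation*}
\textnormal{MSE}_{RBF} = \frac{1}{|\mathcal{S}_d|} \sum_{i=1}^{|\mathcal{S}_d|} \Big( \phi(\mathbf{x}_i) - (\mathcal{K}\bm{\beta})_i \Big)^2 = \frac{1}{|\mathcal{S}_d|} \big\| \bm{\phi} - \mathcal{K}\bm{\beta} \big\|^2 .
\end{equation*}
The positive constant $1/|\mathcal{S}_d|$ does not affect the minimizing argument, so it suffices to minimize $\|\bm{\phi} - \mathcal{K}\bm{\beta}\|^2$ over $\bm{\beta} \in \mathbb{R}^J$.

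Next, I would exploit the fact that this objective is a convex quadratic in $\bm{\beta}$. Expanding gives $\|\bm{\phi}\|^2 - 2\bm{\beta}^T \mathcal{K}^T \bm{\phi} + \bm{\beta}^T \mathcal{K}^T \mathcal{K} \bm{\beta}$, whose gradient with respect to $\bm{\beta}$ is $-2\mathcal{K}^T \bm{\phi} + 2\mathcal{K}^T \mathcal{K} \bm{\beta}$. Setting the gradient to zero yields the normal equations $\mathcal{K}^T \mathcal{K} \bm{\beta} = \mathcal{K}^T \bm{\phi}$, and solving for $\bm{\beta}$ produces the claimed expression $\bm{\beta}^* = (\mathcal{K}^T \mathcal{K})^{-1} \mathcal{K}^T \bm{\phi}$. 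To confirm that this stationary point is the global minimizer rather than a saddle point or maximum, I would note that the Hessian $2\mathcal{K}^T \mathcal{K}$ is positive semidefinite (indeed positive definite under the invertibility assumption below), so the objective is convex and any stationary point is automatically a global minimum.

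The main obstacle, and really the only nontrivial point, is justifying the invertibility of $\mathcal{K}^T \mathcal{K}$, since the stated closed form requires $(\mathcal{K}^T \mathcal{K})^{-1}$ to exist. This holds if and only if $\mathcal{K}$ has full column rank $J$, i.e. the basis functions $K_1,\dots,K_J$ evaluated on the grid $\mathcal{S}_d$ are linearly independent as vectors in $\mathbb{R}^{|\mathcal{S}_d|}$; in particular this requires $|\mathcal{S}_d| \geq J$, which is the natural regime since the discretized grid is far finer than the number of basis functions. I would state this full-rank condition explicitly as a standing assumption, or remark that, should it fail, the Moore--Penrose pseudoinverse $\mathcal{K}^\dagger \bm{\phi}$ delivers the minimum-norm minimizer. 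With invertibility in hand, the normal-equation solution is unique and the proof is complete.
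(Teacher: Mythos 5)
Your proof is correct and takes the same route as the paper, which simply cites the standard linear least squares / normal equations solution without writing out the details. Your explicit treatment of the full-column-rank condition needed for $(\mathcal{K}^T\mathcal{K})^{-1}$ to exist is a worthwhile addition that the paper leaves implicit.
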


\begin{proof}
This is a standard application of the optimal solution to a linear least squares / linear regression problem \cite{CalafioreElGhaoui,Murphy_book1}.    
\end{proof}

\subsection{Numerical Experiments}

\begin{figure}[t!]
\centering 
\includegraphics[scale=0.35]{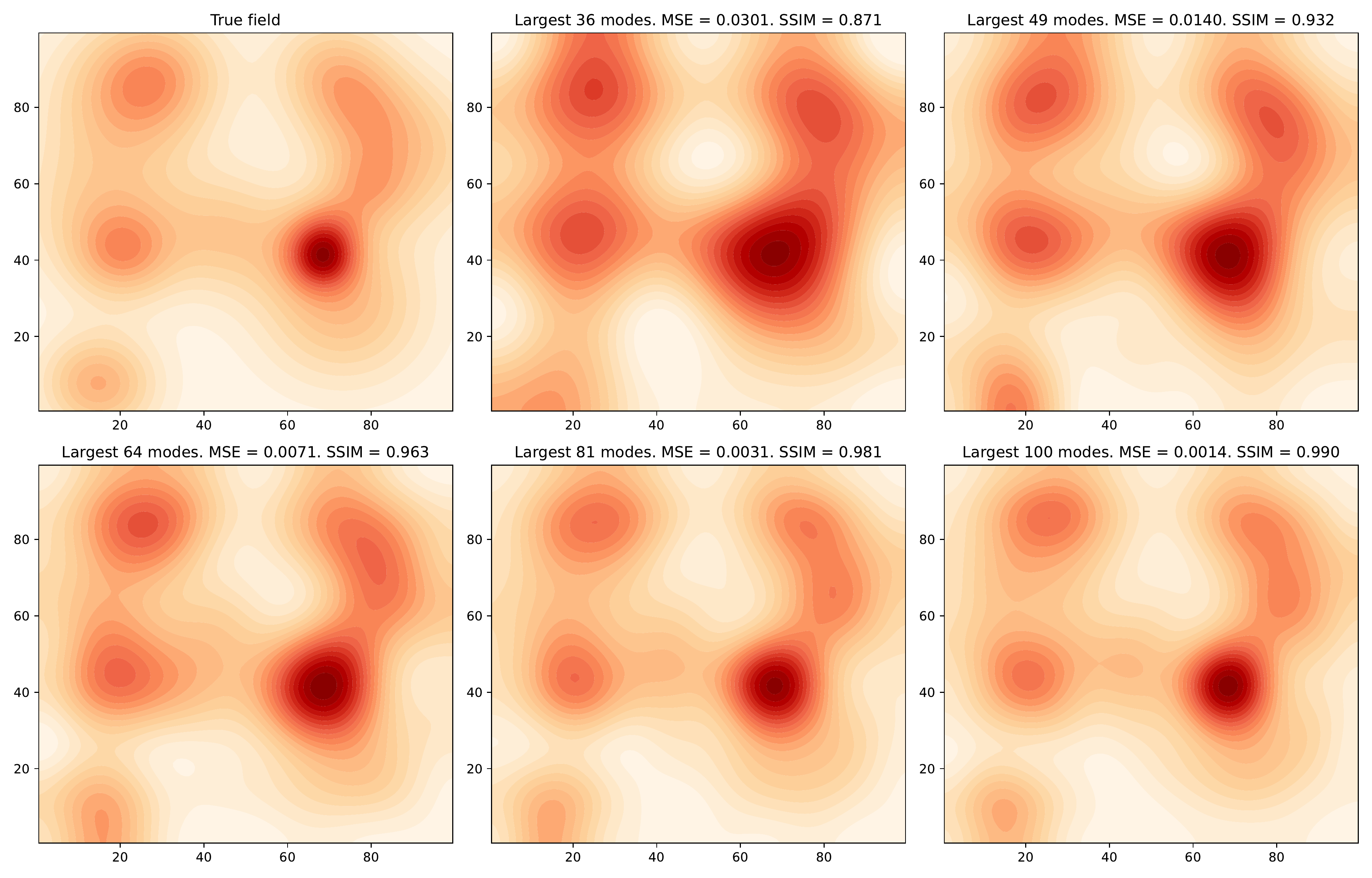} 
\caption{True field and approximations obtained by retaining different numbers of modes}
\label{fig:field_seed341_DCT}
\end{figure} 

\begin{figure}[t!]
\centering 
\includegraphics[scale=0.35]{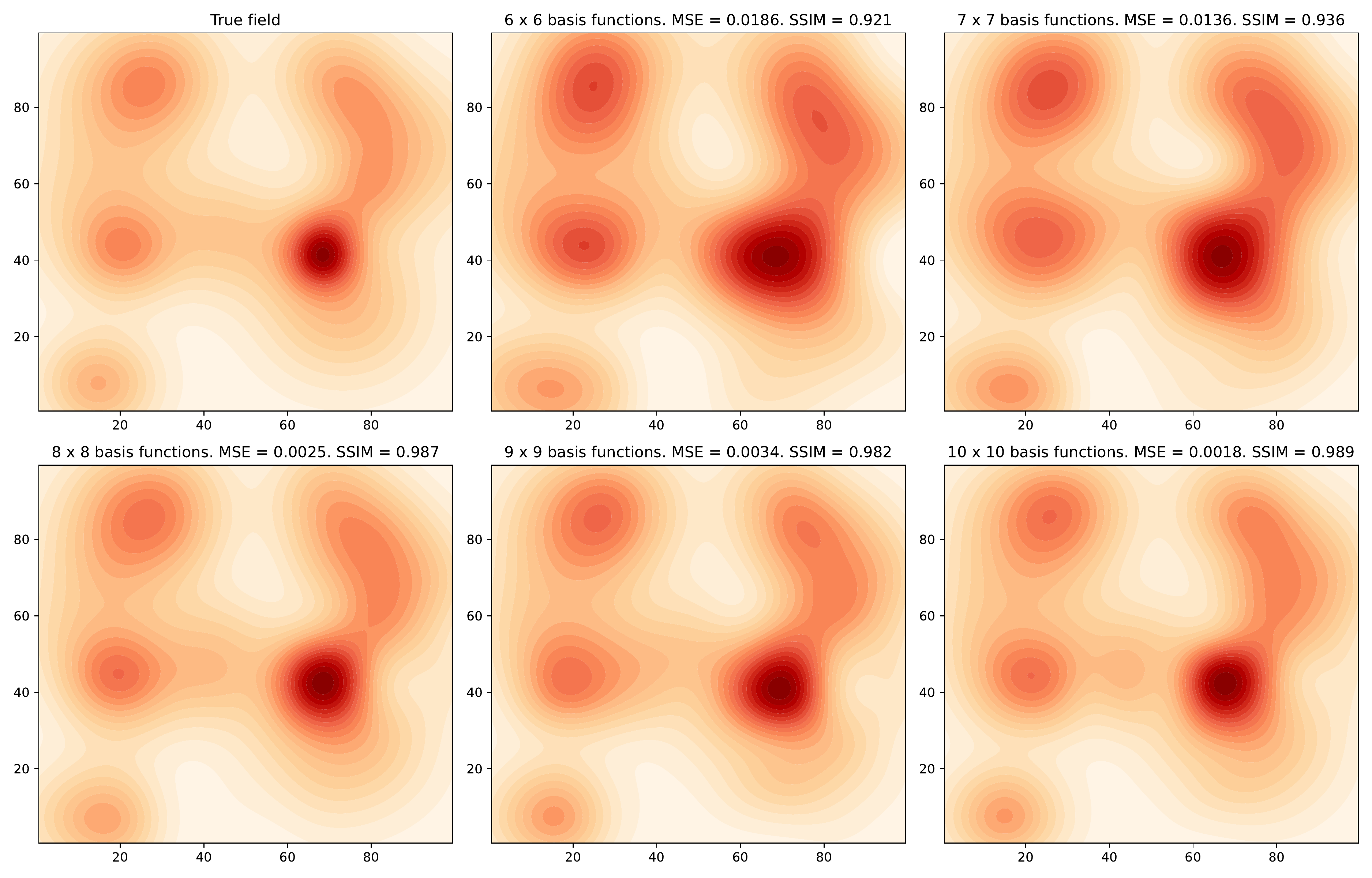} 
\caption{True field and approximations obtained by using different numbers of basis functions}
\label{fig:field_seed341_RBF}
\end{figure} 

\begin{figure}[t!]
\centering 
\includegraphics[scale=0.35]{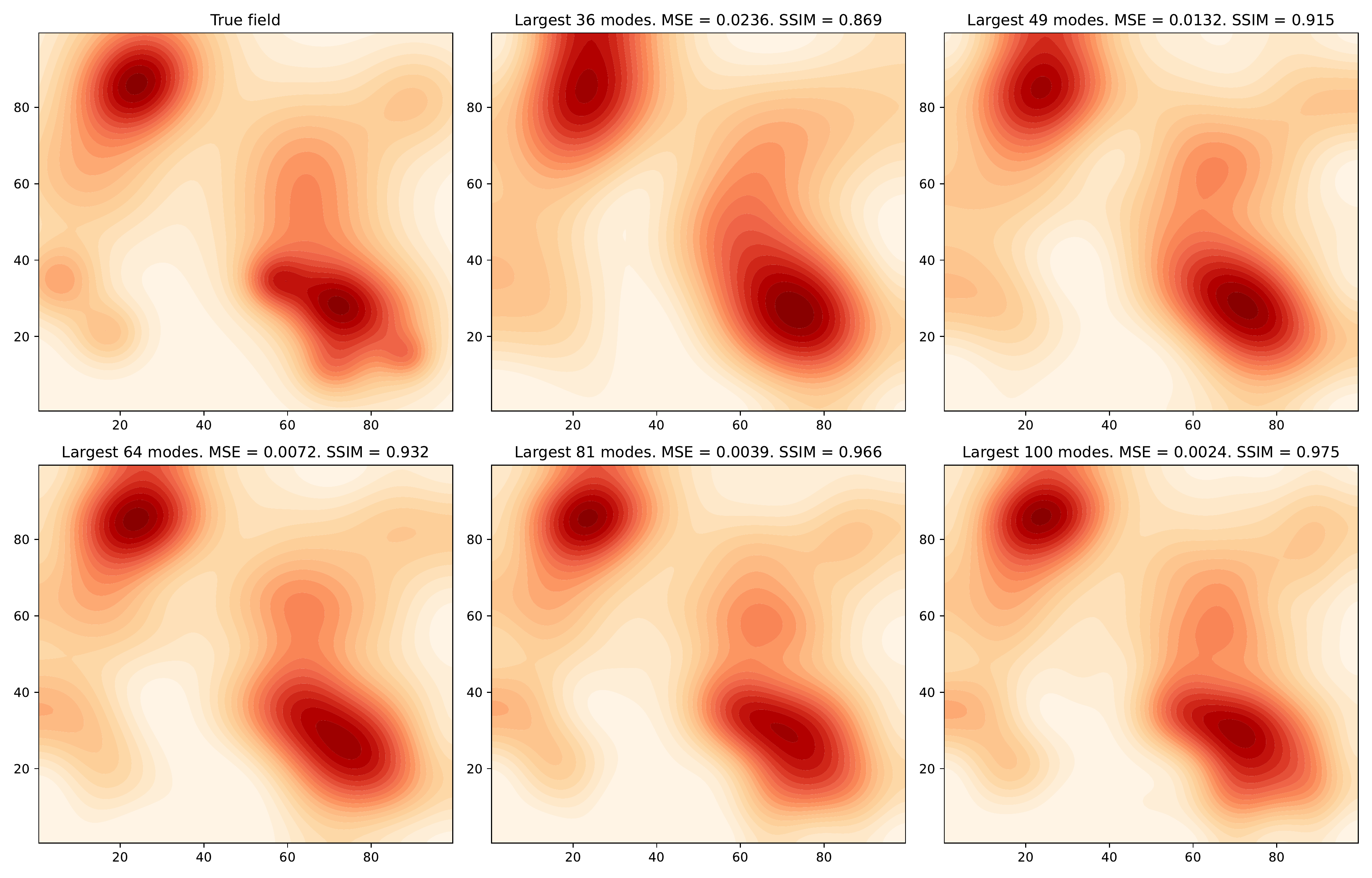} 
\caption{True field and approximations obtained by retaining different numbers of modes}
\label{fig:field_seed343_DCT}
\end{figure} 

\begin{figure}[t!]
\centering 
\includegraphics[scale=0.35]{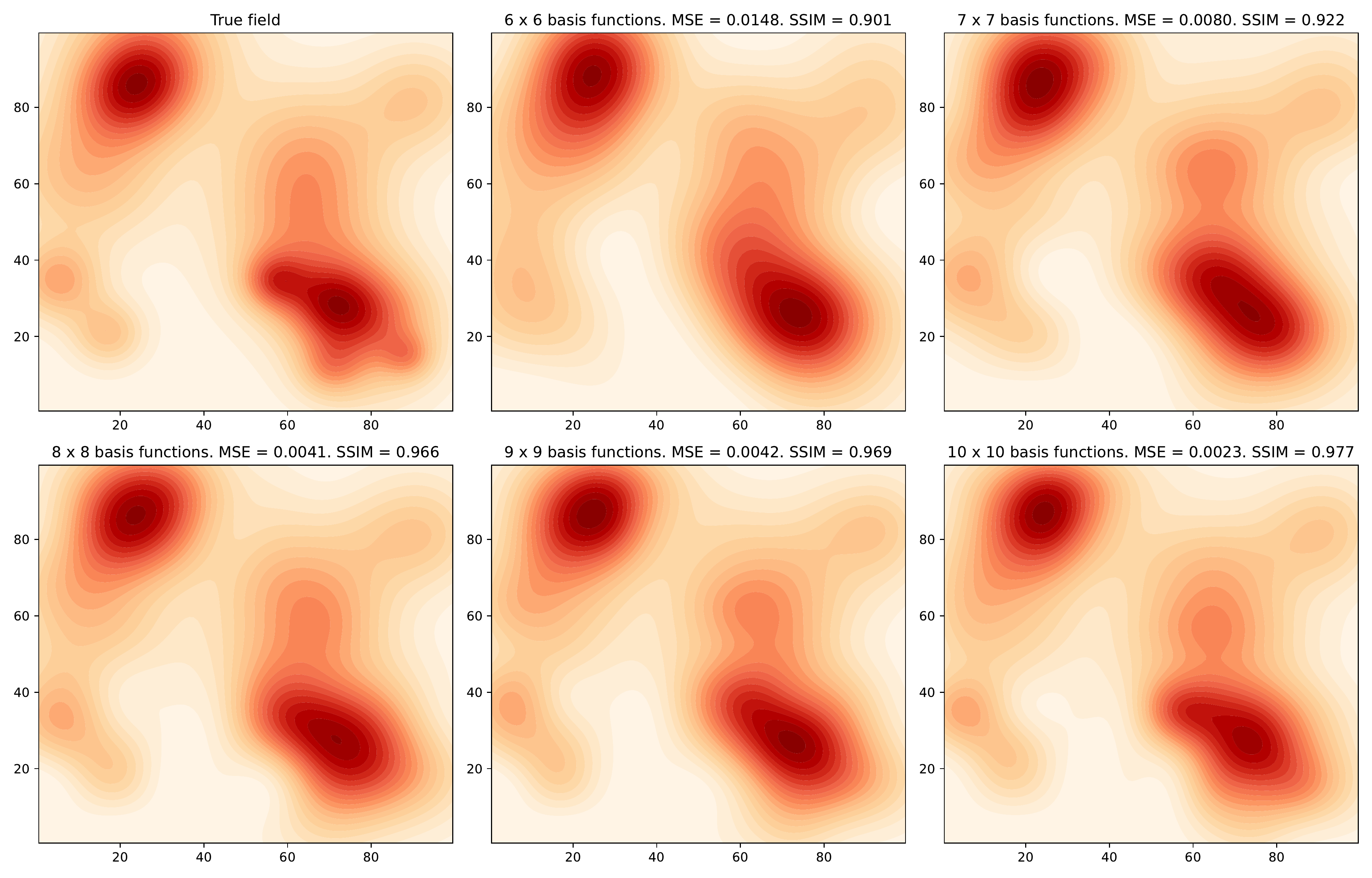} 
\caption{True field and approximations obtained by using different numbers of basis functions}
\label{fig:field_seed343_RBF}
\end{figure} 

In Figs.\ref{fig:field_seed341_DCT}-\ref{fig:field_seed343_RBF}  we show two example fields, and the field approximations that are obtained when various different numbers of modes (for Fourier component model) or radial basis functions (for RBF model) are used. The discretization in the true fields is set as $N_x = N_y = 100$ (so that there are $100^2 = 10000$ modes in total). For the RBF model we set $\mathcal{S}_d = \mathcal{X}_d \times \mathcal{Y}_d$, so that the discretized set of points are the same in the MSE calculations. 
For the Fourier component model, we use the optimal choice of $\tilde{C}^*(u,v)$ given in Lemma \ref{lemma:optimal_C_DCT} (which corresponds to the model \eqref{field_model}), and we choose $\tilde{\mathcal{U}}$ as in \eqref{eqn:U_tilde_largest} to retain the $\tilde{N}$ ``largest'' modes. For the RBF model, we use $J = J_x \times J_y$ radial basis functions with $\mathbf{c}_j$'s in \eqref{eqn:Gaussian_RBF} placed uniformly on a grid at locations $\mathcal{X}_{RBF} \times \mathcal{Y}_{RBF}$, where
\begin{align*}
\mathcal{X}_{RBF} \triangleq \left\{X_{\textnormal{min}} + \Big(\frac{1}{2} + i_x \Big) \delta_x:   i_x \in \{0, \dots, J_x - 1\} \right\} \\
\mathcal{Y}_{RBF} \triangleq \left\{Y_{\textnormal{min}} + \Big(\frac{1}{2} + i_y \Big) \delta_y:  i_y \in \{0, \dots, J_y - 1\} \right\}
\end{align*}
and 
$$\delta_x \triangleq \frac{X_{\textnormal{max}} - X_{\textnormal{min}}}{J_x}, \quad \delta_y \triangleq \frac{Y_{\textnormal{max}} - Y_{\textnormal{min}}}{J_y}. $$
The $\sigma_j$'s in \eqref{eqn:Gaussian_RBF} are chosen to be equal to $\sigma_j = \max(\delta_x,\delta_y), \forall j$. The $\beta_j$'s used in \eqref{field_model_RBF} are the optimal values computed according to Lemma \ref{lemma:optimal_beta_RBF}. 

In the figures we show two performance measures, 1) the MSE, and 2) the structural similarity (SSIM) index, which originated in \cite{WangBovikSheikhSimoncelli} and has been widely adopted in the image processing community. The structural similarity index is a measure of the similarity between two images. In our case, we can regard $\Phi = \{\phi_d(I_x, I_y): I_x = 0,\dots, N_x - 1, I_y = 0, \dots, N_y - 1 \}$ and $\tilde{\Phi} = \{\tilde{\phi}_d(I_x, I_y): I_x = 0,\dots, N_x - 1, I_y = 0, \dots, N_y - 1 \}$ as the image representations of the true and approximated fields respectively, and compute the SSIM between these two images. The SSIM gives a scalar value between 0 and 1, with $\textnormal{SSIM} = 1$ if the two images to be compared are identical. 
We refer to \cite{WangBovikSheikhSimoncelli,WangBovik_MSE} for the specific equations used to compute the SSIM. 

We see from Figs.\ref{fig:field_seed341_DCT}-\ref{fig:field_seed343_RBF} that as more modes (for Fourier component model) or basis functions (for RBF model) are used, the approximations to the true field improves.  When using a smaller number of modes / basis functions the RBF model seems to give better approximations than the Fourier component model, while for larger numbers of modes / basis functions the two approaches perform similarly. We also observe that for these two examples, using a relatively small number of modes (when compared to the total number of modes of 10000) or basis functions will still result in a qualitatively reasonable approximation to the true field.

While the Fourier component model does not seem to offer a significant advantage in terms of approximation quality, there are other reasons where one may consider its use. One advantage of the Fourier model is that it provides a natural way to control the number of model parameters (the coefficients $C(u,v)$) that need to be estimated, in that we simply choose however many modes we wish to retain, whereas with the RBF model one would need to also choose the locations $\mathbf{c}_j$ to place the basis functions and what the values of $\sigma_j$ should be. Additionally, if we want to refine our field model with finer structure by including more model parameters, in the Fourier component model we can reuse any previous estimates  (and further improve them) of the lower order modes (see Section \ref{sec:field_refinement}), since these remain the same in a model with more modes, whereas in the RBF model one would likely need  to recalculate the estimates of all the parameter values when more basis functions are utilized.

\section{Estimation of Fourier Components}
\label{sec:DCT_estimation}
We now return to the problem of estimating the coefficients $C(u,v), \, (u,v) \in \tilde{\mathcal{U}}$ stated in Section~\ref{sec:problem_statement}. 
Given a set of modes to be retained $\tilde{\mathcal{U}}$, of cardinality $\tilde{N}$, define an ordering on $\tilde{\mathcal{U}}$ indexed by $j \in \{0, \dots, \tilde{N}-1\}$. For instance, the elements of $\tilde{\mathcal{U}}$ could be sorted in lexicographic order.  
Denote the $j$-th element under this ordering  by $(u_j, v_j)$, and define 
$$ C_j \triangleq C(u_j,v_j).$$


Then we can express
\begin{align*}
\tilde{\phi}_d(I_x,I_y) & = \sum_{(u,v) \in \tilde{\mathcal{U}}} \alpha_x(u) \alpha_y(v) C(u,v) \cos \left(\frac{(2I_x+1)\pi u}{2 N_x} \right) \cos \left(\frac{(2I_y+1)\pi v}{2 N_y} \right)
\end{align*}
in the alternative form 
\begin{equation}
\label{field_model_vector}
 \tilde{\phi}_d(I_x,I_y)  = \sum_{j=0}^{\tilde{N}-1} \alpha_x(u_j) \alpha_y(v_j) C_j \cos \left(\frac{(2I_x+1)\pi u_j}{2 N_x} \right) \cos \left(\frac{(2I_y+1)\pi v_j}{2 N_y} \right),
\end{equation}
which is a linear function of $(C_0, \dots, C_{\tilde{N} -1})$.

\begin{remark}
\label{remark:param_magnitudes}
The DCT coefficients which we are trying to estimate can be of substantially different orders of magnitude, with the higher order components being much smaller in magnitude than the ``DC'' component corresponding to  $u=v=0$, due to the result that the DCT coefficients decay as $O \big(\frac{1}{(u+1)^2 + (v+1)^2} \big)$ \cite{YamataniSaito}.
In order to estimate parameters with such large differences in magnitude, it is desirable to appropriately scale the parameters that are to be estimated, see \eqref{eqn:param_scaling} below. 
\end{remark}

\begin{remark}
Comparing \eqref{field_model_vector} with the RBF field model \eqref{field_model_RBF}, we see that they are both linear functions of the parameters that are to be estimated. Thus the algorithms developed in e.g. \cite{LaSheng,LaShengChen,RazakSukumarChung_journal, LeongZamani_SP,LeongZamaniShames,TranGarratt,MorelandeSkvortsov}  for estimating fields can in principle also be adapted to work for the field model \eqref{field_model_vector}, under their various assumed measurement models.
\end{remark}

\subsection{Estimation of Fourier Components Using Quantized Measurements}
\label{sec:DCT_estimation_quantized_measurements}
In this subsection we describe an approach to estimating the parameters $C(u,v), \, u=0, \dots, \tilde{N} - 1$, which assumes the quantized measurement model \eqref{quantized_measurement_model}-\eqref{eqn:quantizer}, with the parameters estimated recursively. 
The algorithm uses an online optimization approach similar to \cite{LeongZamaniShames},
however in this paper we will generalize \cite{LeongZamaniShames} from binary measurements to multi-level quantized measurements, and also extend the approach to handle time-varying fields. 

For the measurement model \eqref{quantized_measurement_model}-\eqref{eqn:quantizer}, $n(\bm{\cdot},\bm{\cdot})$ is taken as zero mean noise (not necessarily Gaussian). 
Recalling the observation in Remark \ref{remark:param_magnitudes}, we consider the following scaling of the DCT coefficients:
\begin{equation}
\label{eqn:param_scaling}
\beta_j \triangleq \big((u_j+1)^2 + (v_j+1)^2 \big) C_j,
\end{equation}
and define 
$$\bm{\beta} \triangleq (\beta_0, \dots, \beta_{\tilde{N}-1})$$
as the vector of parameters that are to be estimated. 

We first introduce some notation. 
Let $z_k$ denote the measurement,  and $(I_{x,k}, I_{y,k})$ the position index, at time/iteration $k$. For notational compactness we also denote
\begin{equation}
\label{eqn:I_x_vector}
\bm{I}_{\textbf{x},k} \triangleq (I_{x,k}, I_{y,k})
\end{equation}
and  
\begin{equation}
\label{eqn:K_vector}
\mathbf{K}(\bm{I}_{\textbf{x},k} ) \triangleq \left[\begin{array}{cccc} K_0 (\bm{I}_{\textbf{x},k} ) & K_1 (\bm{I}_{\textbf{x},k} ) &  \dots & K_{\tilde{N}-1}(\bm{I}_{\textbf{x},k} ) \end{array} \right]^T,
\end{equation}
where
\begin{equation}
\label{eqn:K_vector_components}
K_j (\bm{I}_{\textbf{x},k} ) \triangleq \frac{\alpha_x(u_j) \alpha_y(v_j) }{(u_j\!+\!1)^2 \!+\! (v_j\!+\!1)^2} \cos \Big(\frac{(2I_{x,k}+1)\pi u_j}{2 N_x} \Big) \cos \Big(\frac{(2I_{y,k}+1)\pi v_j}{2 N_y} \Big).
\end{equation}
Denote 
\begin{equation}
\label{eqn:z_all}
z_{1:k} \triangleq \{z_1, \dots, z_k\}    
\end{equation}
as the set of measurements collected up to time $k$, with corresponding position indices
\begin{equation}
\label{eqn:I_x_all}
\bm{I}_{\textbf{x},1:k} \triangleq \{(I_{x,1:k}, I_{y,1:k})\}.
\end{equation}

The idea is to recursively estimate $\bm{\beta}$ by trying to minimize a cost function 
\begin{equation}
\label{eqn:cost_fn}
J_k (\bm{\beta}; \bm{I}_{\textbf{x},1:k}, z_{1:k}) = \sum_{t=0}^k g_t (\bm{\beta}; \bm{I}_{\textbf{x},t}, z_t)
\end{equation}
using online optimization techniques \cite{LesageLandryTaylorShames}.
For binary measurements \eqref{binary_measurement_model}, the following per stage cost function from \cite{LeongZamaniShames} can be used:
\begin{equation}
\label{eqn:per_stage_cost_binary}
g_t (\bm{\beta}; \bm{I}_{\textbf{x}}, z) = \left\{\begin{array}{cc} \log(1+\exp(\eta (\bm{\beta}^T \mathbf{K}(\bm{I}_{\textbf{x}}) - \tau))), & z = 0 \\ 
\log(1+\exp(-\eta (\bm{\beta}^T \mathbf{K}(\bm{I}_{\textbf{x}}) - \tau))), & z = 1 
\end{array}
\right.
\end{equation}
where $\eta>0$ is a parameter in the logistic function $\ell(x) \triangleq 1/(1+\exp(\eta x))$, where larger values of $\eta$ will more closely approximate the function $\mathds{1env}(x > 0)$. The cost function \eqref{eqn:per_stage_cost_binary} is similar to cost functions used in binary logistic regression problems \cite[p.516]{CalafioreElGhaoui}. In the current work, we wish to define a cost suitable for multi-level quantized measurements. Note that there are cost functions used in multinomial logistic regression problems \cite{Murphy_book1}, however they are unsuitable for our problem as they usually involve multiple sets of parameters for each possible output $z$, whereas here we just have a single set of parameters $\bm{\beta}$. 

To motivate our cost function, let us look more closely at the binary measurements cost function \eqref{eqn:per_stage_cost_binary}. In the case where the measurement  $z_t$ at time $t$ and position index $\bm{I}_{\textbf{x},t}$ is equal to 0, the cost $g_t (\bm{\beta}; \bm{I}_{\textbf{x},t}, z_t)$ will be small if  $\bm{\beta}^T \mathbf{K}(\bm{I}_{\textbf{x},t}) $ is less than the quantizer threshold $\tau$, and large otherwise. Similarly, when  $z_t=1$,  $g_t (\bm{\beta}; \bm{I}_{\textbf{x},t}, z_t)$ will be small if  $\bm{\beta}^T \mathbf{K}(\bm{I}_{\textbf{x},t}) $ is greater than $\tau$, and large otherwise. For the case of multi-level quantized measurements  with $L$ levels  given by \eqref{eqn:quantizer}, we would like to have a cost function such that 1) when $z_t=0$, $g_t (\bm{\beta}; \bm{I}_{\textbf{x},t}, z_t)$ is small for $\bm{\beta}^T \mathbf{K}(\bm{I}_{\textbf{x},t}) < \tau_0$, and large otherwise, 2)  when $z_t=l, \, l \in \{1,\dots,L-2\}$, $g_t (\bm{\beta}; \bm{I}_{\textbf{x},t}, z_t)$ is small for $\tau_{l-1} \leq \bm{\beta}^T \mathbf{K}(\bm{I}_{\textbf{x},t}) < \tau_l$, and large otherwise, and 3) when $z_t=L-1$,  $g_t (\bm{\beta}; \bm{I}_{\textbf{x},t}, z_t)$ is small for $\bm{\beta}^T \mathbf{K}(\bm{I}_{\textbf{x},t}) > \tau_{L-2}$, and large otherwise. In this paper we will choose the following per stage cost function, which can be easily checked to satisfy these three requirements:
\begin{equation}
\label{eqn:per_stage_cost_multilevel}
g_t (\bm{\beta}; \bm{I}_{\textbf{x}}, z) \triangleq \left\{\begin{array}{ll} \log(1+\exp(\eta (\bm{\beta}^T \mathbf{K}(\bm{I}_{\textbf{x}}) - \tau_0))), & z = 0 \\ 
\log(1+\exp(-\eta (\bm{\beta}^T \mathbf{K}(\bm{I}_{\textbf{x}}) - \tau_{z-1}))) & \\
 \quad + \log(1+\exp(\eta (\bm{\beta}^T \mathbf{K}(\bm{I}_{\textbf{x}}) - \tau_{z}))), & z  \in \{1,\dots,L-2\}\\
\log(1+\exp(-\eta (\bm{\beta}^T \mathbf{K}(\bm{I}_{\textbf{x}}) - \tau_{L-2}))), & z = L-1. 
\end{array}
\right.
\end{equation}
We remark that \eqref{eqn:per_stage_cost_multilevel} reduces to \eqref{eqn:per_stage_cost_binary} when the measurements are binary. 

Now that the per stage cost \eqref{eqn:per_stage_cost_multilevel} has been defined, we will present the online estimation algorithm. First, the gradient of $g_t(\bm{\cdot};\bm{\cdot},\bm{\cdot})$ can be derived as 
\begin{equation}
\label{eqn:per_stage_gradient}
\nabla g_t (\bm{\beta}; \bm{I}_{\textbf{x}}, z) = \left\{\begin{array}{ll} 
\frac{\eta}{1+\exp(-\eta (\bm{\beta}^T \mathbf{K}(\bm{I}_{\textbf{x}}) - \tau_0))} \mathbf{K}(\bm{I}_{\textbf{x}}), & z = 0 \\
\Big( \frac{-\eta}{1+\exp(\eta (\bm{\beta}^T \mathbf{K}(\bm{I}_{\textbf{x}}) - \tau_{z-1}))} & \\ \quad \quad + \frac{\eta}{1+\exp(-\eta (\bm{\beta}^T \mathbf{K}(\bm{I}_{\textbf{x}}) - \tau_z))} \Big) \mathbf{K}(\bm{I}_{\textbf{x}}), & z  \in \{1,\dots,L-2\}\\
 \frac{-\eta}{1+\exp(\eta (\bm{\beta}^T \mathbf{K}(\bm{I}_{\textbf{x}}) - \tau_{L-2}))} \mathbf{K}(\bm{I}_{\textbf{x}}), & z = L-1, 
\end{array}
\right.
\end{equation}
while the Hessian of $g_t(\bm{\cdot};\bm{\cdot},\bm{\cdot})$ can be derived as 
\begin{align}
& \nabla^2 g_t (\bm{\beta}; \bm{I}_{\textbf{x}}, z) \nonumber \\& = \left\{\begin{array}{ll} \frac{\eta^2 \exp(-\eta (\bm{\beta}^T \mathbf{K}(\bm{I}_{\textbf{x}}) - \tau_0))}{(1+\exp(-\eta (\bm{\beta}^T \mathbf{K}(\bm{I}_{\textbf{x}}) - \tau_0)))^2} \mathbf{K}(\bm{I}_{\textbf{x}}) \mathbf{K}(\bm{I}_{\textbf{x}})^T, & z = 0 \\
\Big( \frac{\eta^2 \exp(\eta (\bm{\beta}^T \mathbf{K}(\bm{I}_{\textbf{x}}) - \tau_{z-1}))}{(1+\exp(\eta (\bm{\beta}^T \mathbf{K}(\bm{I}_{\textbf{x}}) - \tau_{z-1})))^2} & \\ \quad \quad + \frac{\eta^2 \exp(-\eta (\bm{\beta}^T \mathbf{K}(\bm{I}_{\textbf{x}}) - \tau_z))}{(1+\exp(-\eta (\bm{\beta}^T \mathbf{K}(\bm{I}_{\textbf{x}}) - \tau_z)))^2} \Big)  \mathbf{K}(\bm{I}_{\textbf{x}}) \mathbf{K}(\bm{I}_{\textbf{x}})^T, & z \in \{1,\dots,L-2\}\\
 \frac{\eta^2 \exp(\eta (\bm{\beta}^T \mathbf{K}(\bm{I}_{\textbf{x}}) - \tau_{L-2}))}{(1+\exp(\eta (\bm{\beta}^T \mathbf{K}(\bm{I}_{\textbf{x}}) - \tau_{L-2})))^2} \mathbf{K}(\bm{I}_{\textbf{x}}) \mathbf{K}(\bm{I}_{\textbf{x}})^T, & z = L-1. 
\end{array}
\right. \nonumber \\
& = \left\{\begin{array}{ll} 
\frac{\eta^2 \exp(\eta (\bm{\beta}^T \mathbf{K}(\bm{I}_{\textbf{x}}) - \tau_0))}{(1+\exp(\eta (\bm{\beta}^T \mathbf{K}(\bm{I}_{\textbf{x}}) - \tau_0)))^2} \mathbf{K}(\bm{I}_{\textbf{x}}) \mathbf{K}(\bm{I}_{\textbf{x}})^T, & z = 0 \\
\Big( \frac{\eta^2 \exp(\eta (\bm{\beta}^T \mathbf{K}(\bm{I}_{\textbf{x}}) - \tau_{z-1}))}{(1+\exp(\eta (\bm{\beta}^T \mathbf{K}(\bm{I}_{\textbf{x}}) - \tau_{z-1})))^2} & \\ \quad \quad + \frac{\eta^2 \exp(\eta (\bm{\beta}^T \mathbf{K}(\bm{I}_{\textbf{x}}) - \tau_z))}{(1+\exp(\eta (\bm{\beta}^T \mathbf{K}(\bm{I}_{\textbf{x}}) - \tau_z)))^2} \Big)  \mathbf{K}(\bm{I}_{\textbf{x}}) \mathbf{K}(\bm{I}_{\textbf{x}})^T, & z \in \{1,\dots,L-2\}\\
 \frac{\eta^2 \exp(\eta (\bm{\beta}^T \mathbf{K}(\bm{I}_{\textbf{x}}) - \tau_{L-2}))}{(1+\exp(\eta (\bm{\beta}^T \mathbf{K}(\bm{I}_{\textbf{x}}) - \tau_{L-2})))^2} \mathbf{K}(\bm{I}_{\textbf{x}}) \mathbf{K}(\bm{I}_{\textbf{x}})^T, & z = L-1. 
\end{array} \right.  \label{eqn:per_stage_Hessian}
\end{align}
An approximate online Newton method \cite{LeongZamaniShames} for estimating the parameters $\bm{\beta}$ is now given by:
\begin{equation}
\label{eqn:approx_ONM1}
 \hat{\bm{\beta}}_{k+1} = \hat{\bm{\beta}}_k - \left( H_k (\hat{\bm{\beta}}_k; \bm{I}_{\textbf{x},1:k}, z_{1:k})  \right)^{-1} G_k (\hat{\bm{\beta}}_k; \bm{I}_{\textbf{x},k}, z_k), 
\end{equation}
where
\begin{align}
G_k (\hat{\bm{\beta}}_k; \bm{I}_{\textbf{x},k}, z_k) & = \nabla g_k (\hat{\bm{\beta}}_k; \bm{I}_{\textbf{x},k}, z_k)  \nonumber \\
H_k(\hat{\bm{\beta}}_{k}; \bm{I}_{\textbf{x},1:k}, z_{1:k}) & =  H_{k-1}(\hat{\bm{\beta}}_{k-1}; \bm{I}_{\textbf{x},1:k-1}, z_{1:k-1}) + \nabla^2 g_k (\hat{\bm{\beta}}_k; \bm{I}_{\textbf{x},k}, z_k) \nonumber \\
H_0 (\hat{\bm{\beta}}_{0}) & = \varsigma I. \label{eqn:approx_ONM2}
\end{align}
The terms $G_k$ and $H_k$ represent approximate gradients and Hessians respectively for the cost function \eqref{eqn:cost_fn}.
The initialization $H_0 (\hat{\bm{\beta}}_{0}) = \varsigma I$  is a Levenberg-Marquardt type modification \cite{ChongZak} to ensure that the matrices $\{H_k\}$ are always non-singular.\footnote{In \cite{LeongZamaniShames} this is equivalently expressed as a full rank initialization  on $\left(H_0(\hat{\bm{\beta}}_{0})\right)^{-1}$.}

In the case where the field (and hence the parameters $\bm{\beta})$ is time-varying, the algorithm \eqref{eqn:approx_ONM1}-\eqref{eqn:approx_ONM2} may not be able to respond quickly to changes in $\bm{\beta}$, due to all past Hessians (including Hessians from old fields) being used in the computation of $H_k(\hat{\bm{\beta}}_{k}; \bm{I}_{\textbf{x},1:k}, z_{1:k}) $ in \eqref{eqn:approx_ONM2}. To overcome this problem, we  will introduce a \emph{forgetting factor} \cite{ManolakisIngleKogon} into the algorithm, 
where the forgetting factor $\delta$ satisfies $0 < \delta \leq 1$, and typically chosen to be close to one. The final estimation procedure is summarized as Algorithm~\ref{alg:DCT_optim_time_varying}. Compared to \eqref{eqn:approx_ONM2}, we note that the Levenberg-Marquardt modification in Algorithm \ref{alg:DCT_optim_time_varying} is done at every time step by adding $\varsigma I$ to $\tilde{H}_k$, as we found that only doing it once at the beginning can lead to algorithm instability due to exponential decay of initial conditions when using a forgetting factor. We also remark that Algorithm \ref{alg:DCT_optim_time_varying} reduces to \eqref{eqn:approx_ONM1}-\eqref{eqn:approx_ONM2} when the forgetting factor $\delta = 1$. 

\begin{algorithm}
\caption{Estimation of Fourier components using online optimization approach}
\label{alg:DCT_optim_time_varying}
\begin{algorithmic}[1]
\State \textbf{Algorithm Parameters}:  Logistic function parameter $\eta > 0$, Levenberg-Marquardt parameter $\varsigma > 0$, forgetting factor $\delta \in (0,1]$
\State \textbf{Inputs}: Initial position index $\bm{I}_{\textbf{x},0}$
\State \textbf{Outputs}: Parameter estimates $\{ \hat{\bm{\beta}}_k \}$
\State Initialize $\tilde{H}_0(\hat{\bm{\beta}}_{0}) = \mathbf{0}$
\For{$k=0,1,2,\dots,$}
	\State Update estimates 
 \begin{align*}
\hat{\bm{\beta}}_{k+1} &= \hat{\bm{\beta}}_k - \left( H_k (\hat{\bm{\beta}}_k; \bm{I}_{\textbf{x},1:k}, z_{1:k}) \right)^{-1} G_k (\hat{\bm{\beta}}_k; \bm{I}_{\textbf{x},k}, z_k) \nonumber \\
G_k (\hat{\bm{\beta}}_k; \bm{I}_{\textbf{x},k}, z_k) & = \nabla g_k (\hat{\bm{\beta}}_k; \bm{I}_{\textbf{x},k}, z_k) \nonumber \\
\tilde{H}_k(\hat{\bm{\beta}}_{k}; \bm{I}_{\textbf{x},1:k}, z_{1:k}) & =  \delta \tilde{H}_{k-1}(\hat{\bm{\beta}}_{k-1}; \bm{I}_{\textbf{x},1:k-1}, z_{1:k-1}) + \nabla^2 g_k (\hat{\bm{\beta}}_k; \bm{I}_{\textbf{x},k}, z_k) \nonumber \\
H_k(\hat{\bm{\beta}}_{k}; \bm{I}_{\textbf{x},1:k}, z_{1:k}) & = \tilde{H}_k(\hat{\bm{\beta}}_{k}; \bm{I}_{\textbf{x},1:k}, z_{1:k})  + \varsigma I, \label{eqn:approx_ONM_time_varying}
\end{align*}
\,\,\,\,\,\,\, where $\nabla g_k (\bm{\cdot}; \bm{\cdot}, \bm{\cdot})$ and $ \nabla^2 g_k(\bm{\cdot}; \bm{\cdot}, \bm{\cdot})$ are computed using \eqref{eqn:per_stage_gradient}-\eqref{eqn:per_stage_Hessian}	
	
 \State Determine  $\bm{I}_{\textbf{x},k+1} =\texttt{ActiveSensing}(\bm{I}_{\textbf{x},k}, \hat{\bm{\beta}}_{k+1})$  using Algorithm \ref{alg:active_sensing_online_opt}
\EndFor
\end{algorithmic}
\end{algorithm} 

\subsection{Measurement Location Selection Using Active Sensing}
For choosing the positions in which the unmanned vehicle should take measurements from, an ``active sensing'' approach \cite{Kreucher_active_sensing,LaShengChen,RisticSkvortsovGunatilaka} can be used, which aims to cleverly choose the next position given information collected so far, in order to more quickly obtain a good estimate of the field. 

In the case of binary measurements, a method for choosing the next measurement location is proposed in \cite{LeongZamaniShames}, that tries to maximize the minimum eigenvalue of an ``expected Hessian'' term $H^+(\bm{I}_{\textbf{x}'})$ over candidate future position indices $\bm{I}_{\textbf{x}'}$. Formally, the problem is:
$$\bm{I}_{\textbf{x},k+1} = \textnormal{arg} \max\limits_{\bm{I}_{\textbf{x}'} \in \mathcal{I}_{k+1}} \lambda_{\min} (H^+(\bm{I}_{\textbf{x}'})),$$
where $\lambda_{\min} (H^+(\bm{I}_{\textbf{x}'}))$ is the minimum eigenvalue of  $H^+(\bm{I}_{\textbf{x}'})$, $\mathcal{I}_{k+1}$ is the set of possible future position indices\footnote{The set $ \mathcal{I}_{k+1}$ may, e.g., capture the set of reachable positions from the current state of the mobile sensor platform.}
and
\begin{equation}
\label{eqn:expected_Hessian_binary}
\begin{split}
 H^+(\bm{I}_{\textbf{x}'}) & \triangleq H_k(\bm{\hat{\beta}}_{k}; \bm{I}_{\textbf{x},1:k}, z_{1:k})  +  \frac{ \eta^2 \exp(\eta (\bm{\hat{\beta}}_{k+1}^T \mathbf{K}(\bm{I}_{\textbf{x}'}) -  \tau)) }{\big(1+\exp(\eta (\bm{\hat{\beta}}_{k+1}^T \mathbf{K}(\bm{I}_{\textbf{x}'}) -  \tau)) \big)^2}  \mathbf{K}(\bm{I}_{\textbf{x}'}) \mathbf{K}(\bm{I}_{\textbf{x}'})^T \mathbb{P}(z' = 0) \\
 & \quad +  \frac{ \eta^2 \exp(\eta (\bm{\hat{\beta}}_{k+1}^T \mathbf{K}(\bm{I}_{\textbf{x}'}) -  \tau)) }{\big(1+\exp(\eta (\bm{\hat{\beta}}_{k+1}^T \mathbf{K}(\bm{I}_{\textbf{x}'}) -  \tau)) \big)^2}  \mathbf{K}(\bm{I}_{\textbf{x}'}) \mathbf{K}(\bm{I}_{\textbf{x}'})^T \mathbb{P}(z' = 1) \\
 & = H_k  (\bm{\hat{\beta}}_{k}; \bm{I}_{\textbf{x},1:k}, z_{1:k}) +  \frac{ \eta^2 \exp(\eta (\bm{\hat{\beta}}_{k+1}^T \mathbf{K}(\bm{I}_{\textbf{x}'}) -  \tau)) }{\big(1+\exp(\eta (\bm{\hat{\beta}}_{k+1}^T \mathbf{K}(\bm{I}_{\textbf{x}'}) -  \tau)) \big)^2}  \mathbf{K}(\bm{I}_{\textbf{x}'}) \mathbf{K}(\bm{I}_{\textbf{x}'})^T, 
\end{split}
\end{equation}
The last line of \eqref{eqn:expected_Hessian_binary} holds since $ \mathbb{P}(z' = 0)  +  \mathbb{P}(z' = 1)  = 1$, irrespective of the distribution of the noise $n(\bm{\cdot},\bm{\cdot})$. 

If we attempt to generalize \eqref{eqn:expected_Hessian_binary} to multi-level measurements, we find that there will be terms $ \mathbb{P}(z' = 0), \mathbb{P}(z' = 1), \dots,  \mathbb{P}(z' = L-1)  $ which cannot all be cancelled, and we will need to specify a noise distribution in order to compute these terms. Since exact knowledge of the noise distribution is usually unavailable in practice, we will instead consider a slightly different objective to optimize, namely a ``predicted Hessian''
\begin{equation}
\label{eqn:predicted_Hessian}
\begin{split}
 \hat{H}(\bm{I}_{\textbf{x}'}) & \triangleq H_k(\bm{\hat{\beta}}_{k}; \bm{I}_{\textbf{x},1:k}, z_{1:k})  +  \nabla^2 g_{k+1}(\bm{\hat{\beta}}_{k+1}; \bm{I}_{\textbf{x}'}, \hat{z}')
\end{split}
\end{equation}
where 
$\hat{z}' \triangleq q\big(\bm{\hat{\beta}}_{k+1}^T \mathbf{K}(\bm{I}_{\textbf{x}'})\big)$ is the predicted future measurement, with the quantizer $q(\bm{\cdot})$ given by \eqref{eqn:quantizer}. Note that \eqref{eqn:predicted_Hessian} reduces to \eqref{eqn:expected_Hessian_binary} in the case of binary measurements. We then maximize the minimum eigenvalue of the predicted Hessian to determine the next measurement location target:
\begin{equation}
\label{prob:maxmin_eig}
\bm{I}_{\textbf{x}}^{\textnormal{target}} = \textnormal{arg} \max\limits_{\bm{I}_{\textbf{x}'} \in \mathcal{I}_{k+1}} \lambda_{\min} (\hat{H}(\bm{I}_{\textbf{x}'})).
\end{equation}

For the set of candidate future position indices  $ \mathcal{I}_{k+1}$, one possible choice could be positions distributed uniformly on a grid within the search region $\mathcal{S}$. 
Once a new location target $\bm{I}_{\textbf{x}}^{\textnormal{target}} $ has been determined, the vehicle heads in that direction. The vehicle will collect measurements and update $\hat{\bm{\beta}}$ along the way, where we collect a new measurement after every $\rho_0$ in distance has been travelled until $ \bm{I}_{\textbf{x}}^{\textnormal{target}}$ is reached, at which time a new location target is determined. 
The procedure is summarized in Algorithm \ref{alg:active_sensing_online_opt},  where $\bm{I}_{\textnormal{closest}} (\mathbf{x})$ denotes the closest position index $(I_x, I_y)$ to $\mathbf{x} \in \mathcal{S}$.
The condition in line \ref{line:at_target} of Algorithm \ref{alg:active_sensing_online_opt} means the location target has been reached, so that a new location target is determined and a location index $\bm{I}_{\textbf{x},k+1}$ in the direction of the new target is returned. Some random exploration is also included in the algorithm, such that the new location target is random with probability $\varepsilon$, similar to $\varepsilon$-greedy algorithms used in reinforcement learning \cite{SuttonBarto}.  The condition in line \ref{line:within_range_target} means that the vehicle is within $\rho_0$ of the target, which will be reached at the next time step, while the condition in line \ref{line:outside_range_target} means  the vehicle will continue heading towards the target and collect measurements along the way. 


\begin{algorithm}[t]
\caption{Active sensing algorithm for online optimization approach: $\bm{I}_{\textbf{x},k+1} = \texttt{ActiveSensing}(\bm{I}_{\textbf{x},k}, \hat{\bm{\beta}}_{k+1})$}
\label{alg:active_sensing_online_opt}
\begin{algorithmic}[1]
\State \textbf{Algorithm Parameters}: Distance $\rho_0 \geq 0$, candidate position indices $\mathcal{I}_{k+1}$, search region $\mathcal{S}$, exploration probability $\varepsilon$
\State \textbf{Inputs}:    $\bm{I}_{\textbf{x},k}$, $\hat{\bm{\beta}}_{k+1}$
\State \textbf{Output}: Next position index $\bm{I}_{\textbf{x},k+1}$
\If{$k=0$}
    \State Initialize $\bm{I}_{\textbf{x}}^{\textnormal{target}}  = \bm{I}_{\textbf{x},0}$
\EndIf
\If{$\bm{I}_{\textbf{x},k} = \bm{I}_{\textbf{x}}^{\textnormal{target}}$} \label{line:at_target}
    \State With probability $\varepsilon$, set new $\bm{I}_{\textbf{x}}^{\textnormal{target}}$ to a random location index in $\{0, \dots, N_x - 1\} \times \{0, \dots, N_y-1\}$, otherwise compute new $\bm{I}_{\textbf{x}}^{\textnormal{target}} = \textnormal{arg} \max\limits_{\bm{I}_{\textbf{x}'} \in \mathcal{I}_{k+1}} \lambda_{\min} (\hat{H}(\bm{I}_{\textbf{x}'})),$ where $\hat{H}(\bm{I}_{\textbf{x}'})$ is given by \eqref{eqn:predicted_Hessian}  \label{line:random_exploration}
    \State Set $\mathbf{x}_{k+1} = \mathbf{x}_k + \rho_0 (\mathbf{x}^\textnormal{target} - \mathbf{x}_k)/||\mathbf{x}^{\textnormal{target}} - \mathbf{x}_k||$ and return $\bm{I}_{\textbf{x},k+1} = \bm{I}_{\textnormal{closest}}(\mathbf{x}_{k+1})$
\ElsIf{$||\mathbf{x}_k - \mathbf{x}^{\textnormal{target}}|| < \rho_0$} \label{line:within_range_target}
    \State Set $\mathbf{x}_{k+1} = \mathbf{x}^{\textnormal{target}}$ and return $\bm{I}_{\textbf{x},k+1} =\bm{I}_{\textbf{x}}^{\textnormal{target}}$
\Else \label{line:outside_range_target}
    \State Set $\mathbf{x}_{k+1} = \mathbf{x}_k + \rho_0 (\mathbf{x}^\textnormal{target} - \mathbf{x}_k)/||\mathbf{x}^{\textnormal{target}} - \mathbf{x}_k||$ and return $\bm{I}_{\textbf{x},k+1} = \bm{I}_{\textnormal{closest}}(\mathbf{x}_{k+1})$
\EndIf

\end{algorithmic}
\end{algorithm} 

\subsection{Refinement of Field Model}
\label{sec:field_refinement}
At the end of Section \ref{sec:DCT_RBF_comparison} we mentioned that we can refine our field model as we go along by including more higher order modes, while reusing previous estimates of the lower order modes. One reason for doing so could occur if we realize that the original set of modes chosen is not sufficient to provide an adequate estimate of the field, so that more modes need to be added. In this subsection we briefly describe how this refinement can be done. 

Suppose that originally, modes in $\tilde{\mathcal{U}}$ of cardinality $\tilde{\mathcal{N}}$ are being estimated, with an ordering on $\tilde{\mathcal{U}}$ indexed by $ j \in \{0, \dots, \tilde{\mathcal{N}} - 1 \}$. After $k$ iterations, suppose we wish to increase the set of estimated modes to the set $\tilde{\mathcal{U}}' \supseteq \tilde{\mathcal{U}}$, of cardinality $\tilde{\mathcal{N}}'$, and define an ordering on  $\tilde{\mathcal{U}}'$ indexed by $ j' \in \{0, \dots, \tilde{\mathcal{N}}' - 1 \}$. 

For the indices $j \in \{0, \dots, \tilde{\mathcal{N}} - 1 \}$, define a mapping 
$$m(.):\{0, \dots, \tilde{\mathcal{N}} - 1 \} \rightarrow \{0, \dots, \tilde{\mathcal{N}}' - 1 \}$$ such that $m(j) = j'$ gives the corresponding index $j' \in \{0, \dots, \tilde{\mathcal{N}}' - 1 \}$. Note that in general $m(.)$ may not be surjective. 

Let $\hat{\bm{\beta}}'$ denote the estimate  (of dimension $\tilde{N}'$), and $\tilde{H}'$ the matrix (of dimension $\tilde{N}' \times \tilde{N}'$) used in the computation of the approximate Hessians, for this new set of modes. 
Then we can re-initialize the estimates $\hat{\bm{\beta}}'$ by setting
$$ \hat{\bm{\beta}}'_{m(j),k+1} =  \hat{\bm{\beta}}_{j,k+1}, \quad \forall j \in \{0, \dots, \tilde{\mathcal{N}} - 1 \},
$$
which copies the estimates of the existing components $\hat{\bm{\beta}}$ (of dimension $\tilde{\mathcal{N}}$) across, with the other (new) components to be initialized appropriately. The term $\tilde{H}'_k$ should also have components corresponding to the existing set of modes copied across, by setting
$$ \tilde{H}'_{m(i) m(j),k} =  \tilde{H}_{i j,k}, \quad \forall i, j \in \{0, \dots, \tilde{\mathcal{N}} - 1 \},
$$
with other components of $\tilde{H}'_k$ set to zero. After this re-initialization, Algorithm~\ref{alg:DCT_optim_time_varying} then proceeds as before. 

\begin{remark}
Instead of adding more modes, the case where we refine the field model by deleting some modes can also be handled in a similar manner. 
\end{remark}

\section{Numerical Studies}
\label{sec:numerical}
For performance evaluation of the field estimation algorithms, we will consider two performance measures, the mean squared error (MSE) and structural similarity index (SSIM). These are defined similar to Section \ref{sec:DCT_RBF_comparison}, except that we replace the approximated field with the estimated field 
$$\hat{\phi}_d (I_x, I_y)   \triangleq \sum_{(u,v) \in \tilde{\mathcal{U}} } \alpha_x(u) \alpha_y(v) \hat{C}(u,v) \cos \left(\frac{(2I_x+1)\pi u}{2 N_x} \right) \cos \left(\frac{(2I_y+1)\pi v}{2 N_y} \right).$$

\subsection{Static Fields}
We consider estimation of the (true) field shown in Fig. \ref{fig:true_field_seed355}, with search region $\mathcal{S} = [0,100] \times [0,100]$. The field is discretized using $N_x = 100$ and $N_y = 100$. We use \eqref{eqn:U_tilde_largest} to select the largest modes that we wish to retain and estimate. 
\begin{figure}[t!]
\centering 
\includegraphics[scale=0.6]{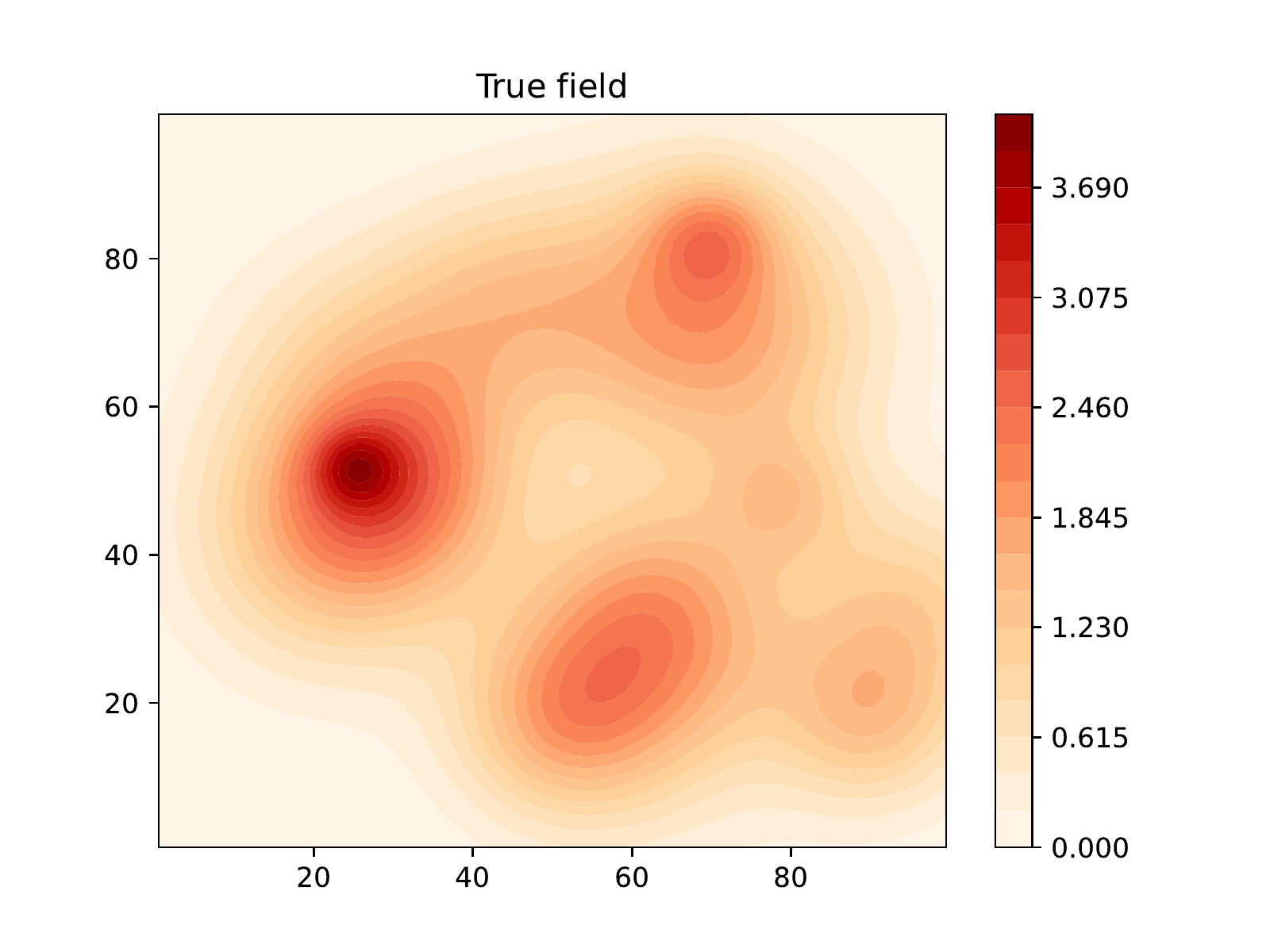} 
\caption{Static field}
\label{fig:true_field_seed355}
\end{figure} 

We use Algorithm \ref{alg:DCT_optim_time_varying} with $\eta=5$ and $\varsigma = 1/20000$. As the field is assumed static, the forgetting factor is set to $\delta = 1$. The initial position index is set to $\bm{I}_{\textbf{x},0} = (50,50)$, close to the center of the search region~$\mathcal{S}$. 
A four level quantizer is used with quantizer thresholds $\tau_0=1, \tau_1=2, \tau_2 = 3$. The measurement noise $n(\bm{\cdot},\bm{\cdot})$ is i.i.d. Gaussian with zero mean and variance equal to 0.1. For choosing the measurement locations, we use Algorithm \ref{alg:active_sensing_online_opt} with $\rho_0 = 10$. The candidate position indices $\mathcal{I}_{k+1}$ are chosen to correspond to 36 points placed uniformly on a grid within the search region $\mathcal{S}$. The exploration probability is chosen as $\varepsilon = 0.1$.

\begin{figure}[t!]
\centering 
\includegraphics[scale=0.6]{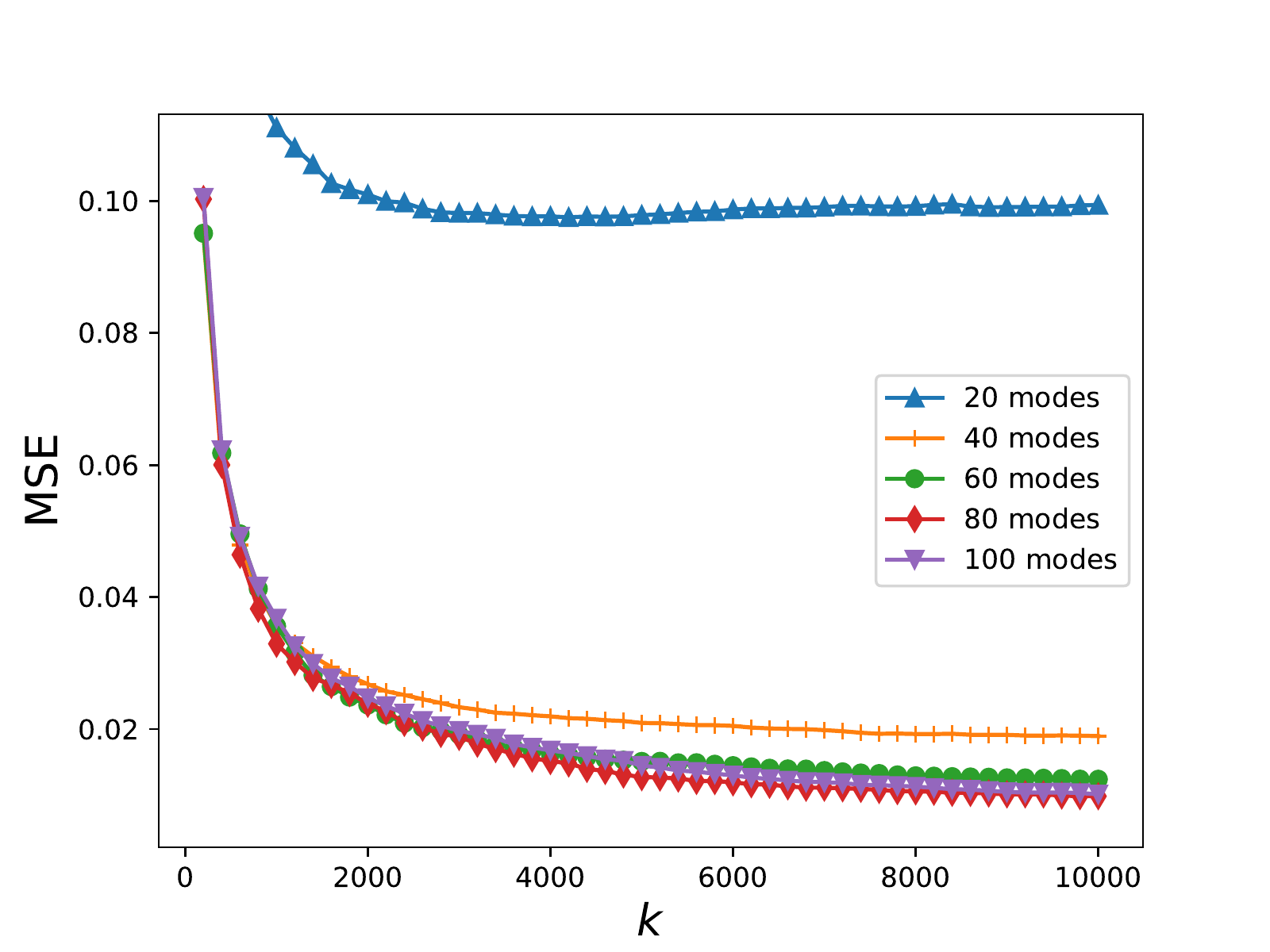} 
\caption{Static field: MSE vs. $k$}
\label{fig:MSE_time_stepped_seed355}
\end{figure} 

\begin{figure}[t!]
\centering 
\includegraphics[scale=0.6]{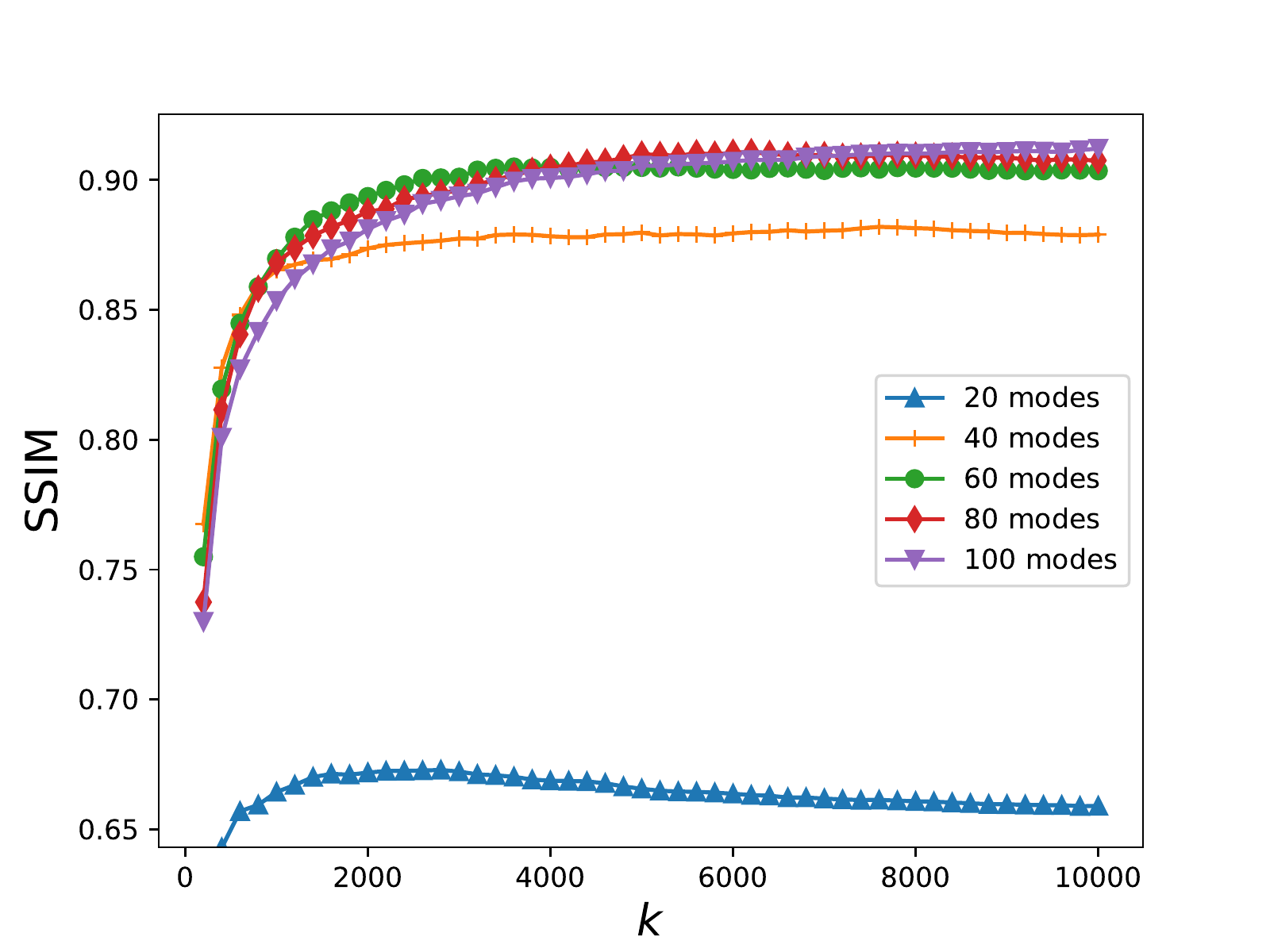} 
\caption{Static field: SSIM vs. $k$}
\label{fig:SSIM_time_stepped_seed355}
\end{figure} 

Fig. \ref{fig:MSE_time_stepped_seed355}  shows the MSE vs. $k$ (corresponding to the number of measurements collected), when various numbers of modes are estimated. Fig.  \ref{fig:SSIM_time_stepped_seed355} shows the SSIM vs. $k$. Each point in Figs. \ref{fig:MSE_time_stepped_seed355} and \ref{fig:SSIM_time_stepped_seed355} is obtained by averaging over 10 runs. We see from the figures that there is a trade-off between the estimation quality, number of modes/parameters that need to be estimated, and number of measurements collected. If a lot of measurements can be collected, then estimating more modes will allow for a better estimate of the field.\footnote{For example, if multiple vehicles can be utilized \cite{LeongZamani_SP} or one has a sensor network, then more measurements can be collected in a limited amount of time.} On the other hand, if fewer measurements are available, estimating fewer modes more accurately may give a better field estimate than estimating lots of modes inaccurately.  
In Fig. \ref{fig:estimated_field_seed355} we show a sample plot of the estimated field when 60 modes are estimated, after 2000 measurements have been collected. 
\begin{figure}[t!]
\centering 
\includegraphics[scale=0.6]{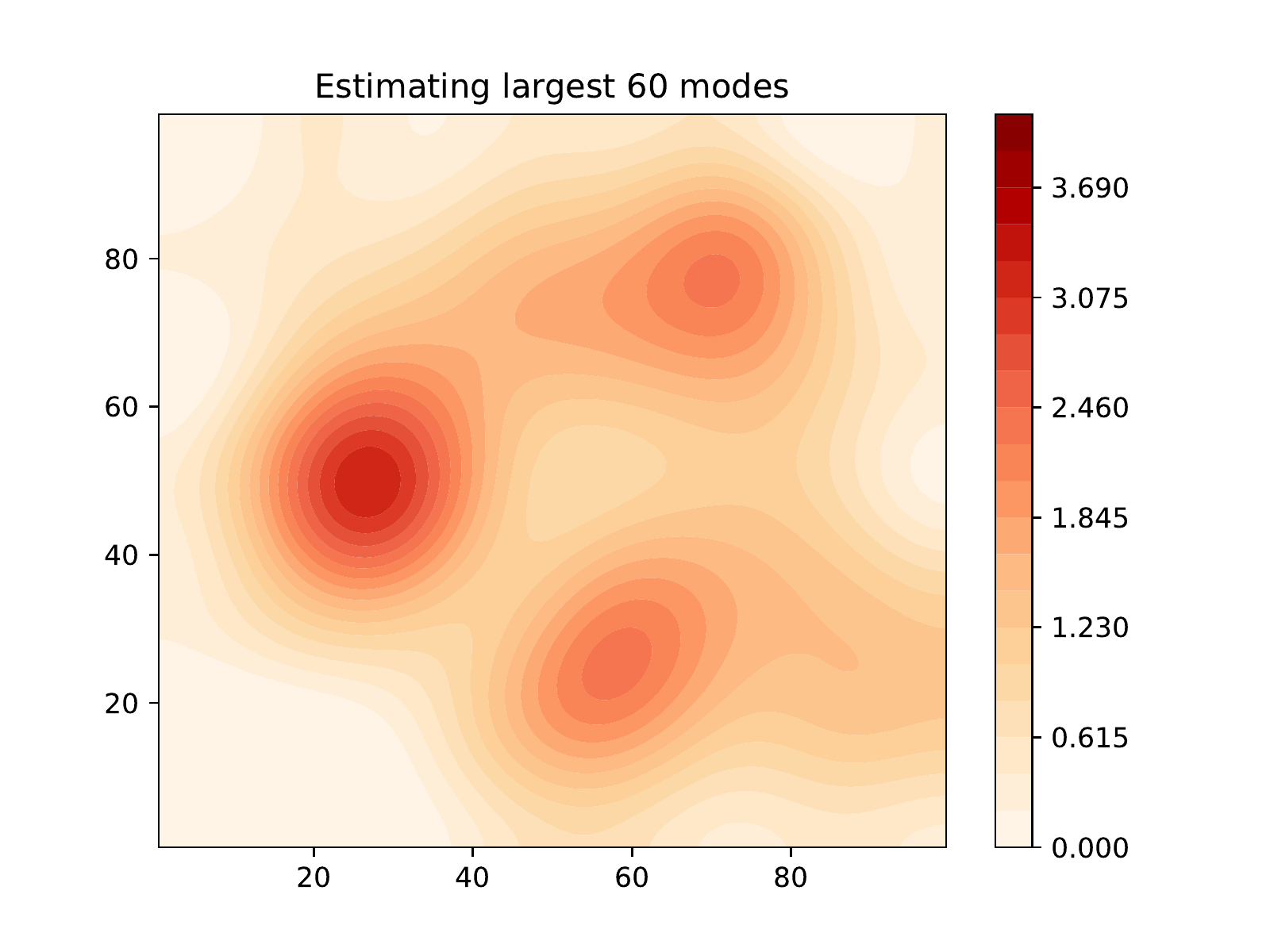} 
\caption{Static field: Estimated field using 2000 measurements}
\label{fig:estimated_field_seed355}
\end{figure} 

\subsection{Time-varying Fields}
We now consider an example with time-varying fields. Suppose the true field is the same of that of Fig. \ref{fig:field_seed341_DCT} for the first 1000 iterations, but then switches to the true field in Fig. \ref{fig:field_seed343_DCT} for the next 1000 iterations. We will use Algorithms \ref{alg:DCT_optim_time_varying} and \ref{alg:active_sensing_online_opt}  with forgetting factor $\delta = 0.995$, with other parameters the same as in the previous example. 

Figs. \ref{fig:MSE_time_varying_seed341_343} and \ref{fig:SSIM_time_varying_seed341_343} show respectively the MSE and SSIM vs. $k$, when the 60 largest modes are estimated. 
We see that after the field changes at $k=1000$ the accuracy of the field estimate drops, but Algorithm~\ref{alg:DCT_optim_time_varying} is able to recover  and estimate the new field as more measurements are collected. 

For comparison, the MSE and SSIM obtained using forgetting factor $\delta = 1$ are also shown. In this case, as there is no forgetting of old information, the field estimates will take much longer to adjust to the new field. 

\begin{figure}[t!]
\centering 
\includegraphics[scale=0.6]{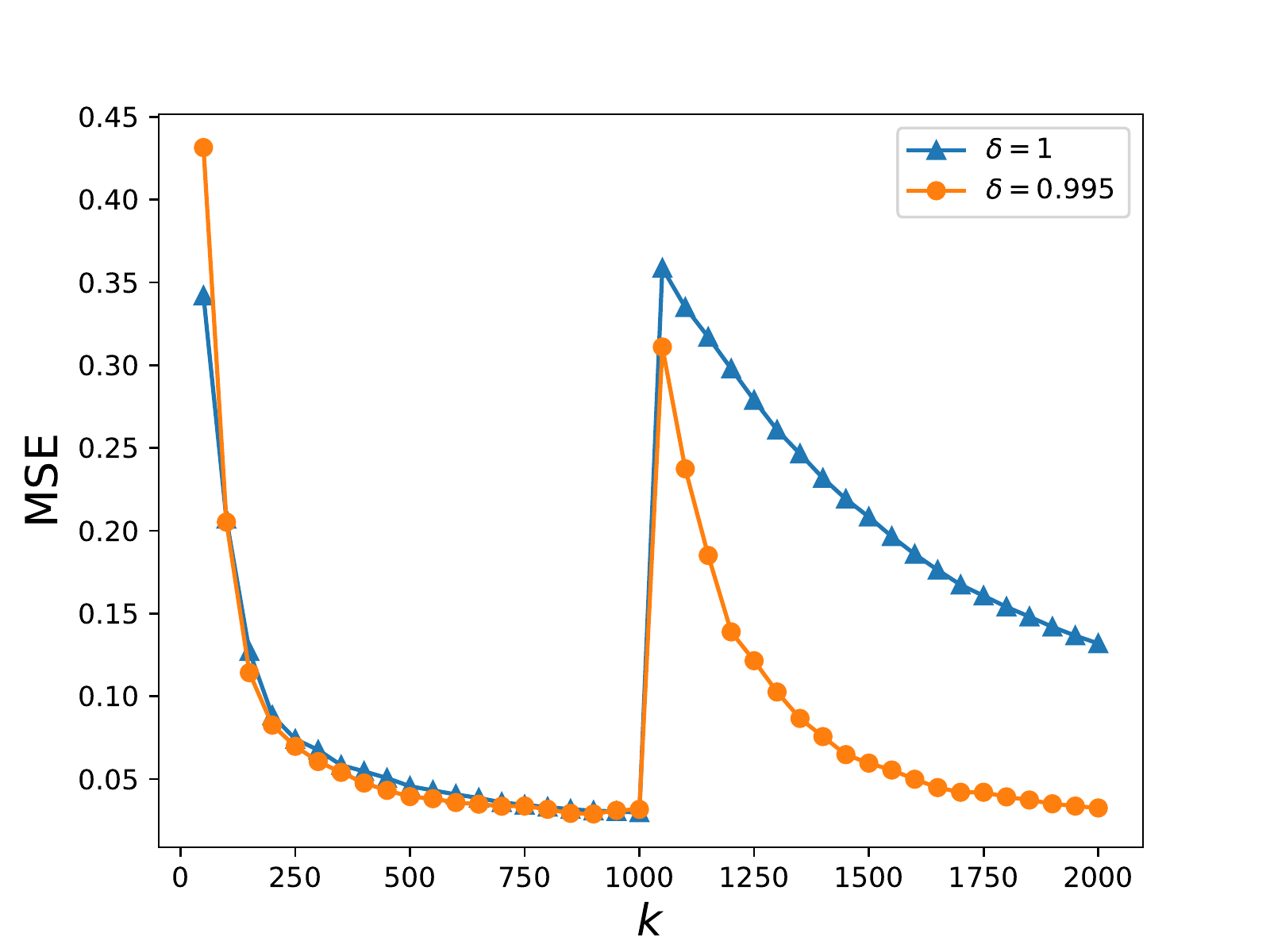} 
\caption{Time varying field: MSE vs. $k$}
\label{fig:MSE_time_varying_seed341_343}
\end{figure} 

\begin{figure}[t!]
\centering 
\includegraphics[scale=0.6]{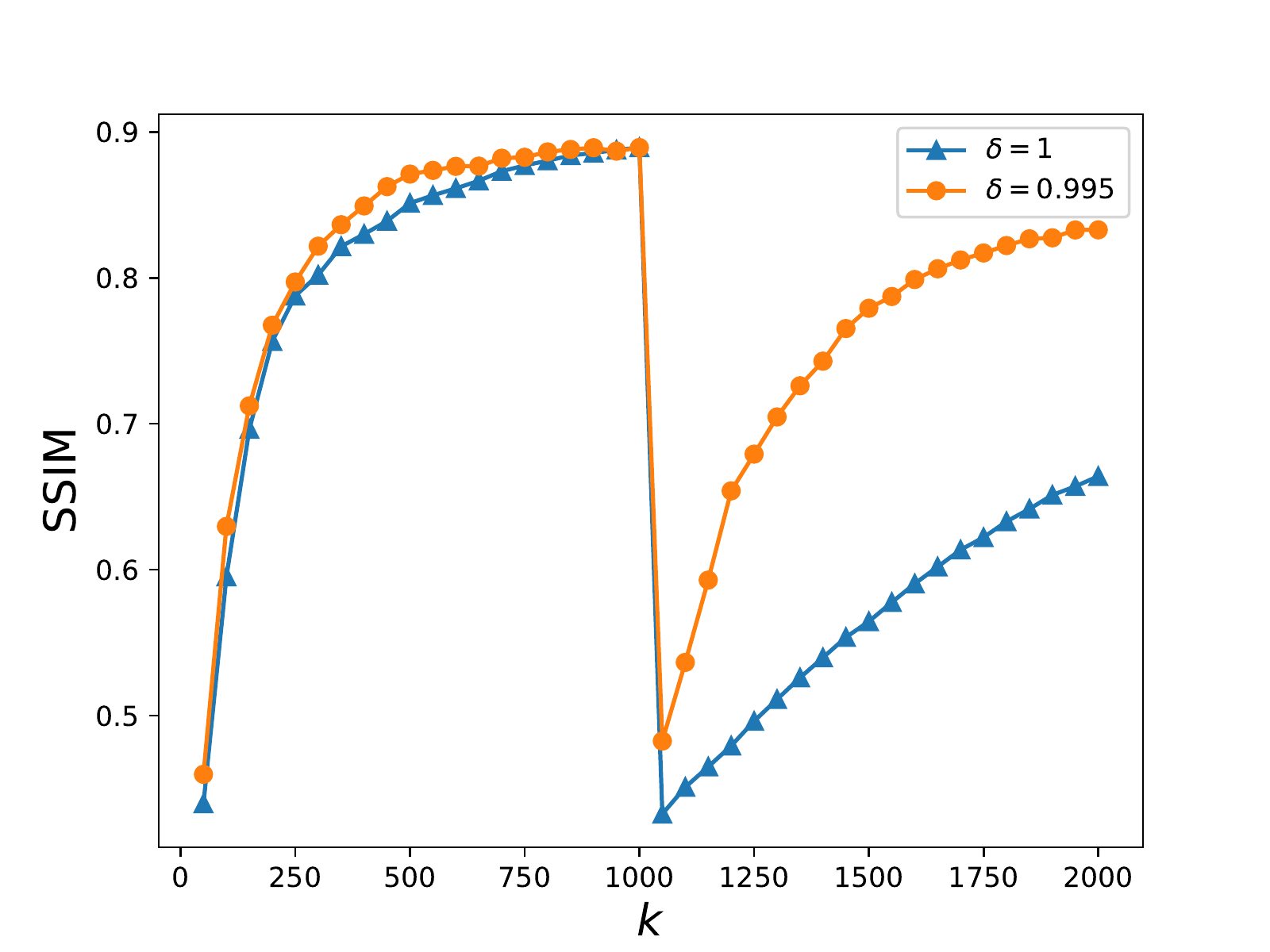} 
\caption{Time varying field: SSIM vs. $k$}
\label{fig:SSIM_time_varying_seed341_343}
\end{figure} 

\subsection{Refinement of Field Model}
Here we consider the use of a refined field model as in Section \ref{sec:field_refinement}, for estimation of the static field shown in Fig. \ref{fig:true_field_seed355}. For the first 1000 iterations the 40 largest modes are estimated, while the 80 largest modes are estimated for the next 1000 iterations, with the estimates of the 40 largest modes reused as described in Section \ref{sec:field_refinement}. Figs. \ref{fig:MSE_refinement_seed355} and \ref{fig:SSIM_refinement_seed355} show respectively the MSE and SSIM vs. $k$. We see that just after $k=1000$ the estimate quality using 80 modes decreases slightly, due to the new modes not being accurately estimated, but the performance quickly improves and eventually outperforms the use of 40 modes. 

\begin{figure}[t!]
\centering 
\includegraphics[scale=0.6]{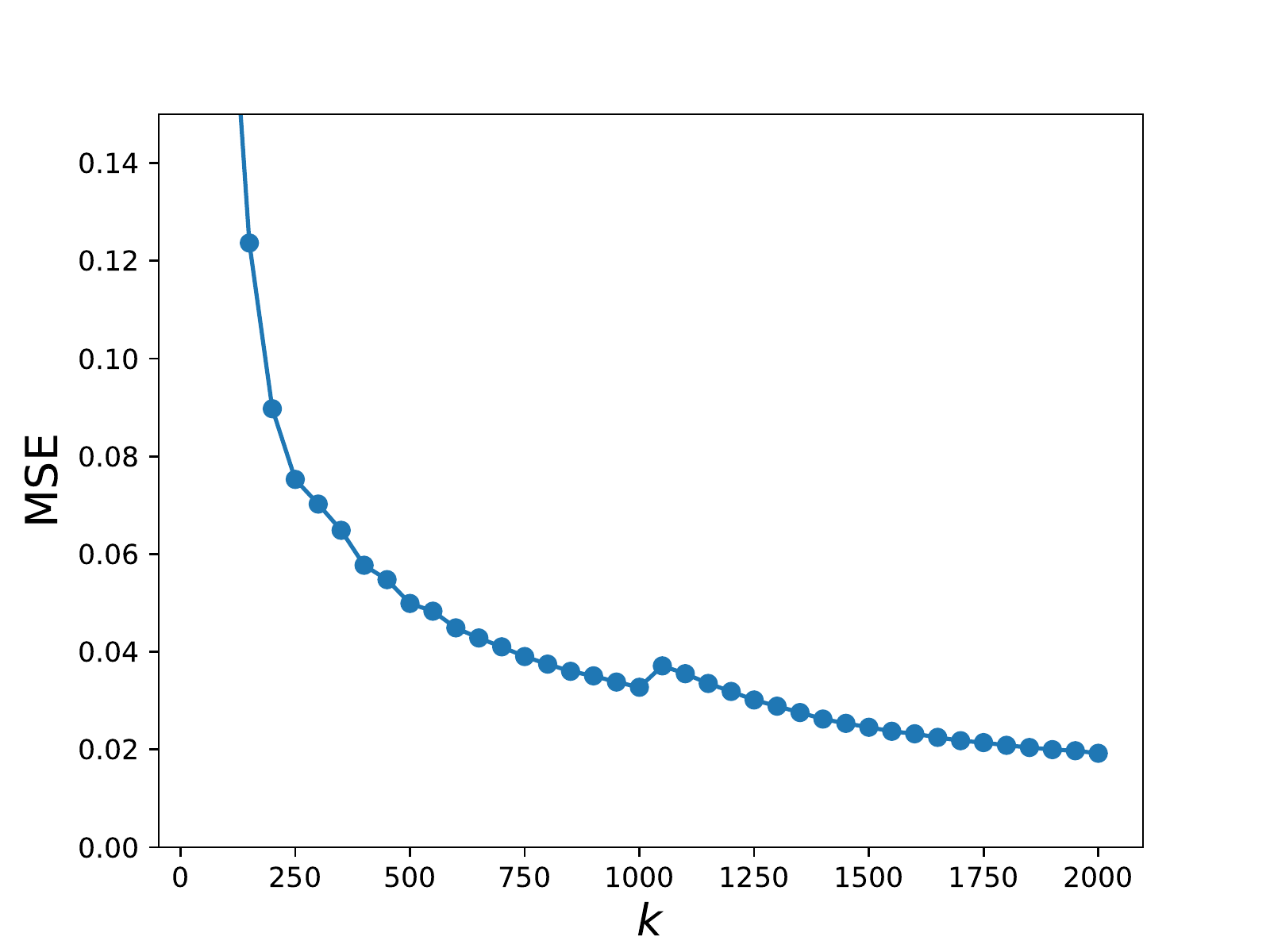} 
\caption{Estimating 40 modes for first 1000 iterations, then 80 modes for next 1000 iterations: MSE vs. $k$}
\label{fig:MSE_refinement_seed355}
\end{figure} 

\begin{figure}[t!]
\centering 
\includegraphics[scale=0.6]{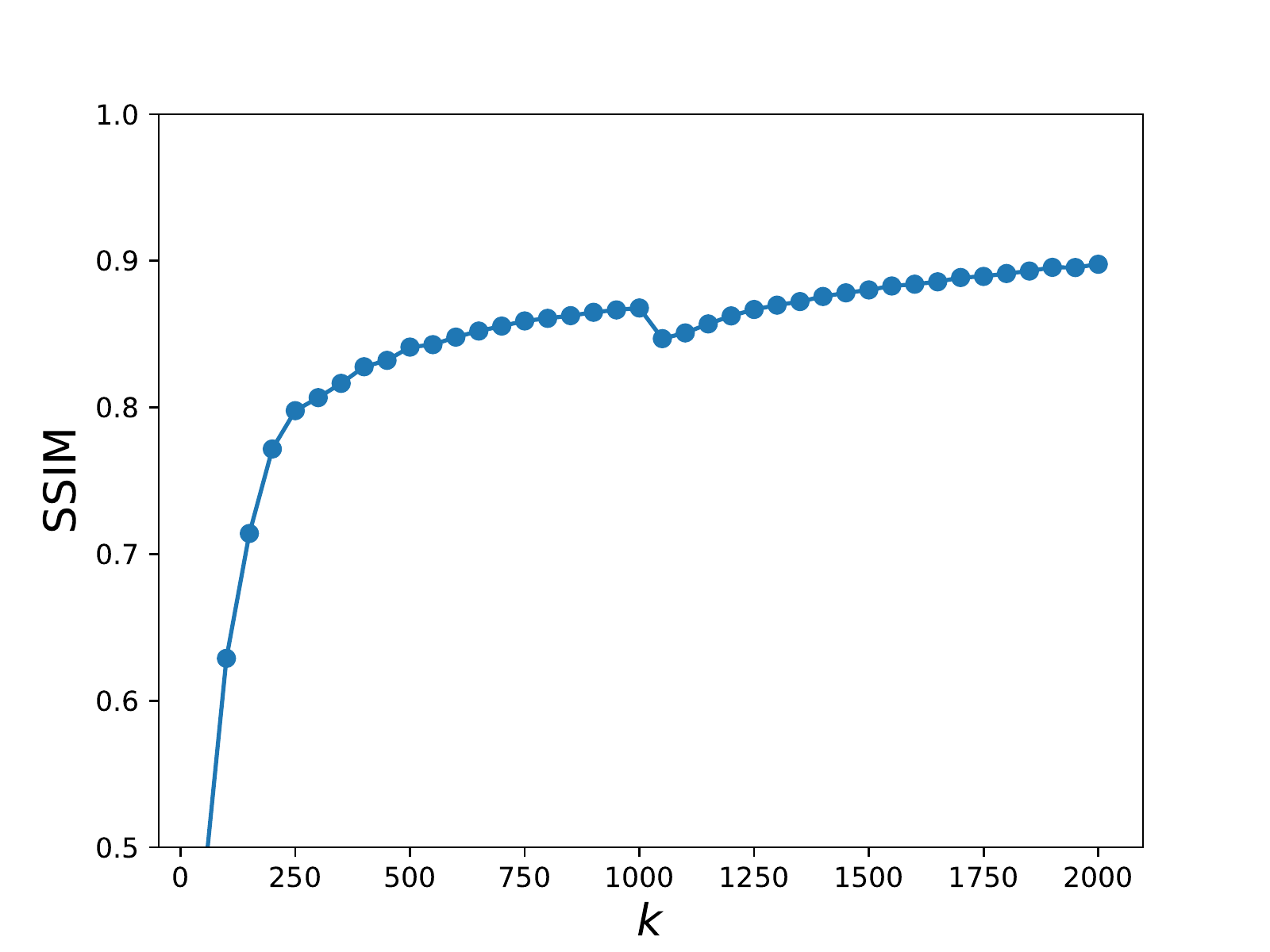} 
\caption{Estimating 40 modes for first 1000 iterations, then 80 modes for next 1000 iterations: SSIM vs. $k$}
\label{fig:SSIM_refinement_seed355}
\end{figure} 

\section{Conclusion}
This paper has studied the estimation of scalar fields, where the field is viewed in the Fourier domain. An algorithm has been presented for estimating the lower order modes of the field, under the assumption of noisy quantized measurements collected from the environment. Our approach assumed an unmanned autonomous vehicle travelling around a region in order to collect these measurements. The setup can also be extended to multiple vehicles, with the vehicles sharing measurements with each other similar to \cite{LeongZamani_SP}. Future work will consider the use of a sensor network for field estimation, with algorithms constrained by local communication and distributed computation.

\section*{Acknowledgment}
The authors thank Mr. Shintaro Umeki for suggesting the use of structural similarity as a performance measure while working at Defence Science and Technology Group as a summer vacation student. 

\begin{appendix}

\section{Proof of Lemma \ref{lemma:optimal_C_DCT}}
\label{appendix:optimal_C_DCT_proof}
By definition, 
\begin{align}
\textnormal{MSE} &= \frac{1}{N_x N_y} \sum_{I_x=0}^{N_x-1} \sum_{I_y=0}^{N_y-1} \Big( \phi_d (I_x, I_y) - \tilde{\phi}_d (I_x, I_y)  \Big)^2  \nonumber \\
& = \frac{1}{N_x N_y} \sum_{I_x=0}^{N_x-1} \sum_{I_y=0}^{N_y-1}  \Bigg( \sum_{(u,v) \in \mathcal{U}} \alpha_x(u) \alpha_y(v) C(u,v) \cos \left(\frac{(2I_x+1)\pi u}{2 N_x} \right) \cos \left(\frac{(2I_y+1)\pi v}{2 N_y} \right) \nonumber \\
& \quad - \sum_{(u,v) \in \tilde{\mathcal{U}} } \alpha_x(u) \alpha_y(v) \tilde{C}(u,v) \cos \left(\frac{(2I_x+1)\pi u}{2 N_x} \right) \cos \left(\frac{(2I_y+1)\pi v}{2 N_y} \right) \Bigg) ^2 \nonumber \\
&= \frac{1}{N_x N_y} \sum_{I_x=0}^{N_x-1} \sum_{I_y=0}^{N_y-1}  \Bigg( \sum_{(u,v) \in \tilde{\mathcal{U}}} \alpha_x(u) \alpha_y(v) \big(C(u,v) - \tilde{C}(u,v) \big) \nonumber  \\ 
& \quad \quad \times \cos \left(\frac{(2I_x+1)\pi u}{2 N_x} \right) \cos \left(\frac{(2I_y+1)\pi v}{2 N_y} \right) \nonumber \\
& \quad + \sum_{(u,v) \in \mathcal{U} \setminus \tilde{\mathcal{U}} } \alpha_x(u) \alpha_y(v) C(u,v) \cos \left(\frac{(2I_x+1)\pi u}{2 N_x} \right) \cos \left(\frac{(2I_y+1)\pi v}{2 N_y} \right) \Bigg) ^2 \nonumber \\
& = \frac{1}{N_x N_y} \sum_{I_x=0}^{N_x-1} \sum_{I_y=0}^{N_y-1}  \Bigg[ \Bigg( \sum_{(u,v) \in \tilde{\mathcal{U}}} \alpha_x(u) \alpha_y(v) \big(C(u,v) - \tilde{C}(u,v) \big) \nonumber \\ 
& \quad \quad \times \cos \left(\frac{(2I_x+1)\pi u}{2 N_x} \right) \cos \left(\frac{(2I_y+1)\pi v}{2 N_y} \right) \Bigg)^2 \nonumber \\
& \quad + 2 \Bigg( \sum_{(u,v) \in \tilde{\mathcal{U}}} \alpha_x(u) \alpha_y(v) \big(C(u,v) - \tilde{C}(u,v) \big) \cos \left(\frac{(2I_x+1)\pi u}{2 N_x} \right) \cos \left(\frac{(2I_y+1)\pi v}{2 N_y} \right) \Bigg) \nonumber \\
& \quad \quad \times \Bigg( \sum_{(u,v) \in \mathcal{U} \setminus \tilde{\mathcal{U}} } \alpha_x(u) \alpha_y(v) C(u,v) \cos \left(\frac{(2I_x+1)\pi u}{2 N_x} \right) \cos \left(\frac{(2I_y+1)\pi v}{2 N_y} \right) \Bigg) \nonumber \\
& \quad + \Bigg(\sum_{(u,v) \in \mathcal{U} \setminus \tilde{\mathcal{U}} } \alpha_x(u) \alpha_y(v) C(u,v) \cos \left(\frac{(2I_x+1)\pi u}{2 N_x} \right) \cos \left(\frac{(2I_y+1)\pi v}{2 N_y} \right) \Bigg)^2 \Bigg] \nonumber \\
& = \frac{1}{N_x N_y} \sum_{I_x=0}^{N_x-1} \sum_{I_y=0}^{N_y-1}  \Bigg[ \Bigg( \sum_{(u,v) \in \tilde{\mathcal{U}}} \alpha_x(u) \alpha_y(v) \big(C(u,v) - \tilde{C}(u,v) \big) \nonumber \\ 
& \quad \quad \times \cos \left(\frac{(2I_x+1)\pi u}{2 N_x} \right) \cos \left(\frac{(2I_y+1)\pi v}{2 N_y} \right) \Bigg)^2 \nonumber \\
& \quad + \Bigg(\sum_{(u,v) \in \mathcal{U} \setminus \tilde{\mathcal{U}} } \alpha_x(u) \alpha_y(v) C(u,v) \cos \left(\frac{(2I_x+1)\pi u}{2 N_x} \right) \cos \left(\frac{(2I_y+1)\pi v}{2 N_y} \right) \Bigg)^2 \Bigg]  \label{eqn:MSE_derivation_DCT}
\end{align}
The last equality follows since 
\begin{align*}
\sum_{I_x=0}^{N_x-1} \sum_{I_y=0}^{N_y-1} & \alpha_x(u) \alpha_y(v) \big(C(u,v) - \tilde{C}(u,v) \big) \cos \left(\frac{(2I_x+1)\pi u}{2 N_x} \right) \cos \left(\frac{(2I_y+1)\pi v}{2 N_y} \right) \\
& \quad \times \alpha_x(u') \alpha_y(v') C(u',v') \cos \left(\frac{(2I_x+1)\pi u'}{2 N_x} \right) \cos \left(\frac{(2I_y+1)\pi v'}{2 N_y} \right)  
\end{align*}
is equal to zero for all $(u,v) \in \mathcal{U}$ and $ (u',v') \in \mathcal{U} \setminus \tilde{\mathcal{U}}$, by orthogonality of the DCT basis vectors \cite{AhmedNatarajanRao,Strang_DCT}. To conclude the proof, we note that the expression for the MSE given in the last equality of \eqref{eqn:MSE_derivation_DCT} is  clearly minimized when $ \tilde{C}(u,v) = C(u,v), \, \forall (u,v) \in \tilde{U}.$

\end{appendix}

\bibliography{IEEEabrv,source_localization}

\begin{thebibliography}{10}
\providecommand{\url}[1]{#1}
\csname url@samestyle\endcsname
\providecommand{\newblock}{\relax}
\providecommand{\bibinfo}[2]{#2}
\providecommand{\BIBentrySTDinterwordspacing}{\spaceskip=0pt\relax}
\providecommand{\BIBentryALTinterwordstretchfactor}{4}
\providecommand{\BIBentryALTinterwordspacing}{\spaceskip=\fontdimen2\font plus
\BIBentryALTinterwordstretchfactor\fontdimen3\font minus
  \fontdimen4\font\relax}
\providecommand{\BIBforeignlanguage}[2]{{%
\expandafter\ifx\csname l@#1\endcsname\relax
\typeout{** WARNING: IEEEtran.bst: No hyphenation pattern has been}%
\typeout{** loaded for the language `#1'. Using the pattern for}%
\typeout{** the default language instead.}%
\else
\language=\csname l@#1\endcsname
\fi
#2}}
\providecommand{\BIBdecl}{\relax}
\BIBdecl

\bibitem{HutchinsonOh}
M.~Hutchinson, H.~Oh, and W.-H. Chen, ``A review of source term estimation
  methods for atmospheric dispersion events using static or mobile sensors,''
  \emph{Inf. Fusion}, vol.~36, pp. 130--148, 2017.

\bibitem{RisticMorelandeGunatilaka}
B.~Ristic, M.~Morelande, and A.~Gunatilaka, ``Information driven search for
  point sources of gamma radiation,'' \emph{Signal Process.}, vol.~90, pp.
  1225--1239, 2010.

\bibitem{Yardibi}
T.~Yardibi, J.~Li, P.~Stoica, M.~Xue, and A.~B. Baggeroer, ``Source
  localization and sensing: A nonparametric iterative adaptive approach based
  on weighted least squares,'' \emph{{IEEE} Trans. Aerosp. Electron. Syst.},
  vol.~46, no.~1, pp. 425--443, Jan. 2010.

\bibitem{AnnunzioYoungHaupt}
A.~J. Annunzio, G.~S. Young, and S.~E. Haupt, ``Utilizing state estimation to
  determine the source location for a contaminant,'' \emph{Atmos. Environ.},
  vol.~46, pp. 580--589, 2012.

\bibitem{NeumannBennetts_advanced_robotics}
P.~P. Neumann, V.~{Hernandez Bennetts}, A.~J. Lilienthal, M.~Bartholmai, and
  J.~H. Schiller, ``Gas source localization with a micro-drone using
  bio-inspired and particle filter-based algorithms,'' \emph{Advanced
  Robotics}, vol.~27, no.~9, pp. 725--738, 2013.

\bibitem{WadeSenocak}
D.~Wade and I.~Senocak, ``Stochastic reconstruction of multiple source
  atmospheric contaminant dispersion events,'' \emph{Atmos. Environ.}, vol.~74,
  pp. 45--51, 2013.

\bibitem{NewazJeong}
A.~A.~R. Newaz, S.~Jeong, H.~Lee, H.~Ryu, and N.~Y. Chong, ``{UAV}-based
  multiple source localization and contour mapping of radiation fields,''
  \emph{Robotics and Autonomous Systems}, vol.~85, pp. 12--25, 2016.

\bibitem{RisticGunatilakaGailis}
B.~Ristic, A.~Gunatilaka, and R.~Gailis, ``Localisation of a source of
  hazardous substance dispersion using binary measurements,'' \emph{Atmos.
  Environ.}, vol. 142, pp. 114--119, 2016.

\bibitem{Selvaratnam_CDC}
D.~D. Selvaratnam, I.~Shames, D.~V. Dimarogonas, J.~H. Manton, and B.~Ristic,
  ``Co-operative estimation for source localisation using binary sensors,'' in
  \emph{Proc. {IEEE} Conf. Decision and Control}, Melbourne, Australia, Dec.
  2017, pp. 1572--1577.

\bibitem{HutchinsonLiu}
M.~Hutchinson, C.~Liu, and W.-H. Chen, ``Source term estimation of a hazardous
  airborne release using an unmanned aerial vehicle,'' \emph{J. Field
  Robotics}, vol.~36, pp. 797--817, 2019.

\bibitem{EslingerMendez}
P.~W. Eslinger, J.~M. Mendez, and B.~T. Schrom, ``Source term estimation in the
  presence of nuisance signals,'' \emph{J. Environ. Radioact.}, vol. 203, pp.
  220--225, 2019.

\bibitem{LiChen}
D.~Li, F.~Chen, Y.~Wang, and X.~Wang, ``Implementation of a
  {UAV}-sensory-system-based hazard source estimation in a chemical plant
  cluster,'' in \emph{IOP Conf. Series}, 2019, p. 012043.

\bibitem{ParkAnSeoOh}
M.~Park, S.~An, J.~Seo, and H.~Oh, ``Autonomous source search for {UAVs} using
  {Gaussian} mixture model-based infotaxis: Algorithm and flight experiments,''
  \emph{{IEEE} Trans. Aerosp. Electron. Syst.}, vol.~57, no.~6, pp. 4238--4254,
  Dec. 2021.

\bibitem{WeidmannHirst}
D.~Weidmann, B.~Hirst, M.~Jones, R.~Ijzermans, D.~Randell, N.~Macleod,
  A.~Kannath, J.~Chu, and M.~Dean, ``Locating and quantifying methane emissions
  by inverse analysis of path-integrated concentration data using a
  {Markov-Chain Monte Carlo} approach,'' \emph{ACS Earth Space Chem.}, vol.~6,
  pp. 2190--2198, Jun. 2022.

\bibitem{MartinPayton}
P.~Martin, O.~Payton, J.~Fardoulis, D.~Richards, Y.~Yamashiki, and T.~Scott,
  ``Low altitude unmanned aerial vehicle for characterising remediation
  effectiveness following the {FDNPP} accident,'' \emph{J. Environ. Radioact.},
  vol. 151, pp. 58--63, Jun. 2016.

\bibitem{WangYangWu}
Z.~Wang, J.~Yang, and J.~Wu, ``Level set estimation of spatial-temporally
  correlated random fields with active sparse sensing,'' \emph{{IEEE} Trans.
  Aerosp. Electron. Syst.}, vol.~53, no.~2, pp. 862--876, Apr. 2017.

\bibitem{MorelandeSkvortsov}
M.~R. Morelande and A.~Skvortsov, ``Radiation field estimation using a
  {Gaussian} mixture,'' in \emph{Proc. Intl. Conf. Inf. Fusion}, Seattle, USA,
  Jul. 2009, pp. 2247--2254.

\bibitem{LaSheng}
H.~M. La and W.~Sheng, ``Distributed sensor fusion for scalar field mapping
  using mobile sensor networks,'' \emph{{IEEE} Trans. Cybern.}, vol.~43, no.~2,
  pp. 766--778, Apr. 2013.

\bibitem{LaShengChen}
H.~M. La, W.~Sheng, and J.~Chen, ``Cooperative and active sensing in mobile
  sensor networks for scalar field mapping,'' \emph{{IEEE} Trans. Syst., Man,
  Cybern., Syst.}, vol.~45, no.~1, pp. 1--12, Jan. 2015.

\bibitem{RazakSukumarChung_journal}
R.~A. Razak, S.~Sukumar, and H.~Chung, ``Scalar field estimation with mobile
  sensor networks,'' \emph{Int. J. Robust Nonlinear Control}, vol.~31, pp.
  4287--4305, 2021.

\bibitem{LeongZamani_SP}
A.~S. Leong and M.~Zamani, ``Field estimation using binary measurements,''
  \emph{Signal Process.}, vol. 194, no. 108430, 2022.

\bibitem{LeongZamaniShames}
A.~S. Leong, M.~Zamani, and I.~Shames, ``A logistic regression approach to
  field estimation using binary measurements,'' \emph{{IEEE} Signal Process.
  Lett.}, vol.~29, pp. 1848--1852, 2022.

\bibitem{TranGarratt}
V.~P. Tran, M.~A. Garratt, K.~Kasmarik, S.~G. Anavatti, A.~S. Leong, and
  M.~Zamani, ``Multi-gas source localization and mapping by flocking robots,''
  \emph{Inf. Fusion}, vol.~91, pp. 665--680, 2023.

\bibitem{BritanikYipRao}
V.~Britanik, P.~Yip, and K.~R. Rao, \emph{Discrete Cosine and Sine
  Transforms}.\hskip 1em plus 0.5em minus 0.4em\relax Academic Press, 2007.

\bibitem{Strang_DCT}
G.~Strang, ``The discrete cosine transform,'' \emph{{SIAM} Review}, vol.~41,
  no.~1, pp. 135--147, 1999.

\bibitem{YamataniSaito}
K.~Yamatani and N.~Saito, ``Improvement of {DCT}-based compression algorithms
  using {Poisson}'s equation,'' \emph{{IEEE} Trans. Image Process.}, vol.~15,
  no.~12, pp. 3672--3689, 2006.

\bibitem{RobinsRapleyThomas}
P.~Robins, V.~Rapley, and P.~Thomas, ``A probabilistic chemical sensor model
  for data fusion,'' in \emph{Proc. Int. Conf. Inf. Fusion}, Philadelphia, USA,
  Jul. 2005, pp. 1116--1122.

\bibitem{ChengKondaSinghScott}
Y.~Cheng, U.~Konda, T.~Singh, and P.~Scott, ``Bayesian estimation for {CBRN}
  sensors with non-{Gaussian} likelihood,'' \emph{{IEEE} Trans. Aerosp.
  Electron. Syst.}, vol.~47, no.~1, pp. 684--701, Jan. 2011.

\bibitem{CalafioreElGhaoui}
G.~Calafiore and L.~{El Ghaoui}, \emph{Optimization Models}.\hskip 1em plus
  0.5em minus 0.4em\relax Cambridge University Press, 2014.

\bibitem{Murphy_book1}
K.~P. Murphy, \emph{Probabilistic Machine Learning: An Introduction}.\hskip 1em
  plus 0.5em minus 0.4em\relax The {MIT} Press, 2022.

\bibitem{WangBovikSheikhSimoncelli}
Z.~Wang, A.~C. Bovik, H.~R. Sheikh, and E.~P. Simoncelli, ``Image quality
  assessment: From error visibility to structural similarity,'' \emph{{IEEE}
  Trans. Image Process.}, vol.~13, no.~4, pp. 600--612, Apr. 2004.

\bibitem{WangBovik_MSE}
Z.~Wang and A.~C. Bovik, ``Mean squared error: Love it or leave it?''
  \emph{{IEEE} Signal Process. Mag.}, vol.~26, no.~1, pp. 98--117, Jan. 2009.

\bibitem{LesageLandryTaylorShames}
A.~Lesage-Landry, J.~A. Taylor, and I.~Shames, ``Second-order online nonconvex
  optimization,'' \emph{{IEEE} Trans. Autom. Control}, vol.~66, no.~10, pp.
  4866--4872, Oct. 2021.

\bibitem{ChongZak}
E.~K.~P. Chong and S.~H. \.{Z}ak, \emph{An Introduction to Optimization},
  4th~ed.\hskip 1em plus 0.5em minus 0.4em\relax John Wiley \& Sons, 2013.

\bibitem{ManolakisIngleKogon}
D.~G. Manolakis, V.~K. Ingle, and S.~M. Kogon, \emph{Statistical and Adaptive
  Signal Processing}.\hskip 1em plus 0.5em minus 0.4em\relax Artech House,
  2005.

\bibitem{Kreucher_active_sensing}
C.~M. Kreucher, K.~D. Kastella, and A.~O. Hero, ``Sensor management using an
  active sensing approach,'' \emph{Signal Process.}, vol.~85, pp. 607--624,
  2005.

\bibitem{RisticSkvortsovGunatilaka}
B.~Ristic, A.~Skvortsov, and A.~Gunatilaka, ``A study of cognitive strategies
  for an autonomous search,'' \emph{Inf. Fusion}, vol.~28, pp. 1--9, 2016.

\bibitem{SuttonBarto}
R.~S. Sutton and A.~G. Barto, \emph{Reinforcement Learning}, 2nd~ed.\hskip 1em
  plus 0.5em minus 0.4em\relax The {MIT Press}, 2018.

\bibitem{AhmedNatarajanRao}
N.~Ahmed, T.~Natarajan, and K.~R. Rao, ``Discrete cosine transform,''
  \emph{{IEEE} Trans. Comput.}, vol. C-23, no.~1, pp. 90--93, Jan. 1974.

\end{thebibliography}
\bibliographystyle{IEEEtran}

\end{document}